\documentclass[3p]{elsarticle}
\usepackage[utf8]{inputenc}
\usepackage{graphicx}
\usepackage{amssymb}
\usepackage{amsmath}
\usepackage{amsthm}
\usepackage{amsfonts}
\usepackage{array}
\usepackage{float}
\usepackage{caption}
\usepackage[nottoc]{tocbibind}
\usepackage{esvect}
\usepackage{enumitem}
\usepackage{hyperref} 
\usepackage{epigraph} 
\usepackage{url}
\usepackage[all]{xy}
\usepackage{tikz-cd}
\usetikzlibrary{decorations.pathreplacing}
\usepackage{faktor}
\usepackage[dutch,english]{babel}
\usepackage{hhline}
\usepackage{relsize}
\usepackage{subcaption} 
\usepackage{verbatim} 

\newcommand{\dr}{\gamma} 
\newcommand{\tS}[1]{\tilde S_{#1}(-\dr)}
\newcommand{\tT}[1]{\tilde T_{#1}(-\dr)}
\newcommand{\crop}[1]{\mathcal{W}_{#1}}   

\newtheorem{lem}{Lemma}
\newtheorem{theorem}{Theorem}
\newtheorem{definition}{Definition}

\newcommand*\colvec[3][]{
    \begin{pmatrix}\ifx\relax#1\relax\else#1\\\fi#2\\#3\end{pmatrix}
}

\usepackage{scalerel}
\usepackage{stackengine,wasysym}

\newcommand\wtilde[1]{\ThisStyle{%
  \setbox0=\hbox{$\SavedStyle#1$}%
  \stackengine{-.1\LMpt}{$\SavedStyle#1$}{%
    \stretchto{\scaleto{\SavedStyle\mkern.2mu\AC}{.5150\wd0}}{.6\ht0}%
  }{O}{c}{F}{T}{S}%
}}

\title{$\gamma$-CounterBoost: Optimizing Response Time Tails using Job Type Information Only}

\author[1]{Nils Charlet}
\ead{nils.charlet@uantwerpen.be}

\author[1]{Benny Van Houdt}
\ead{benny.vanhoudt@uantwerpen.be}

\affiliation[1]{organization={Dept. Computer Science, University of Antwerp},
city={Antwerp},
 citysep={,}, 
country={Belgium}}

\begin{document}

\begin{abstract}
    In a recent paper the $\gamma$-Boost scheduling policy was shown to minimize the tail of the response time distribution in a light-tailed M/G/1-queue. This policy schedules jobs using a boosted arrival time, defined as the arrival time of a job minus its boost, where the boost of a job depends on its exact job size. The $\gamma$-Boost policy can also be used when only partial job size information is available, such as the type of an incoming job. In such case the boost $b_i$ of a job depends solely on its type $i$ and $\gamma$-Boost was shown to optimize the tail among all boost policies, where a boost policy is fully determined by the $b_i$ values. In the partial information setting $\gamma$-Boost relies on two types of information: job types and arrival times. 

    This paper focuses on the problem of minimizing the tail in a light-tailed M/G/1-queue in the partial job size information setting when the scheduler only makes use of the job types and {\it does not exploit arrival times}. Prior work showed that
    in case of $2$ job types the so-called Nudge-$M$ policy minimizes the tail in a large class of scheduling policies. In this paper we introduce the $\gamma$-CounterBoost policy in the partial information setting with $d \geq 2$ job types and prove that it minimizes the tail in an even broader class of scheduling policies called Contextual CounterBoost policies. The $\gamma$-CounterBoost policy reduces to the Nudge-$M$ policy in case of $d=2$ job types. 
\end{abstract}
\maketitle

\section{Introduction}

It is not hard to show that the mean response time of a job in a single server queue is minimized by the
Shortest Remaining Processing Time (SRPT) policy. To implement this policy exact information on the
processing time of a job is required. In the absence of such information, the Gittins index policy is known to minimize
the mean response time in the M/G/1 queue \cite{Gittins2011,Aalto2009,Scully2021}. The Gittins index policy does not require job size information of 
individual jobs, it only requires information about the job size distribution. When minimizing the
mean response time, the Gittins policy assigns an index
to each job based on the amount of service a job has received thus far and serves the job with the highest index.

The problem of optimizing the tail, instead of the mean, of the response
time distribution in an M/G/1 queue is a more difficult problem. For heavy tailed
distributions classic policies like
Processor Sharing, SRPT, Least Attained Service as well as the Gittins policy are tail optimal for 
{\it regularly-varying} distributions \cite{borst2003,boxma07,boostZiv}. For most light tailed distributions (so-called class I distributions
\cite{abate1}) 
it was long believed that the First Come First Served (FCFS) policy is tail optimal \cite{wierman12}. However it was recently shown that
the response time tail of FCFS can be improved upon using the Nudge scheduling policy \cite{nudge}. More specifically,  
the response time under FCFS in an M/G/1 queue with a class I job size distribution decays exponentially, meaning
\[ P[R_{FCFS} > t ] \approx c_{FCFS} e^{-\gamma t},\]
for some constants $c_{FCFS}$ and $\gamma$. The decay rate $\gamma$ of FCFS is known to be optimal (as it is the
decay rate of the workload in the queue which does not depend on the scheduling policy) \cite{boxma07}, but the {\it prefactor}
$c_{FCFS}$ can be further reduced by the Nudge scheduling policy \cite{nudge}. The initial Nudge scheduling policy operates
almost identical to FCFS, except that an arriving {\it short} job may pass/skip the last job in the queue,
provided that the last job is a {\it long} job that was not passed by any other job.

Intuitively, the difference between dealing with light and heavy tailed distributions lies in the
manner in which they cause long delays. In \cite{boxma07}, it is described that for heavy tailed job sizes, a large workload is most likely caused by a single huge job, one \textit{catastrophe} as they call it. For this reason policies that are tail optimal in this setting ensure that small jobs do not wait behind these huge jobs, even when this huge job is already in service. In contrast, for light tailed job sizes, long delays are mostly caused by a \textit{conspiracy}, which corresponds to a large period of time in which inter-arrival times are smaller than expected, while job sizes are larger than expected.

\subsection*{Tail Optimality for Light-Tailed (class I) jobs}

The discovery of the improved response time tail of the Nudge scheduling policy opened up many new research
directions as it became immediately clear that it is not hard to reduce the prefactor of the
initial Nudge scheduling policy by considering different Nudge variations \cite{vanhoudt_nudge, charlet2024nudgeM}. A key characteristic of these Nudge
variations is that they all rely solely on job size (or type) information of the last few jobs
that arrived in the queue to determine how many jobs an arriving job is allowed to pass. 
This job size/type information is too limited to minimize the prefactor of the tail of the response time among all scheduling policies.

It turns out that apart from the job sizes, the arrival times of jobs also play a key role in coming up with
a response time tail optimal policy.  This led to the introduction of the $\gamma$-Boost scheduling policy which was shown to have an optimal response time tail, that is, it minimizes the prefactor, for class I job size distributions in case the scheduler knows the exact size $z$ and arrival time of
any incoming job \cite{boostZiv}. The $\gamma$-Boost policy schedules jobs based on their {\it boosted arrival time}, as opposed to
their actual arrival time used by FCFS. 
The boosted arrival time of a job is defined as its actual arrival time $t_a$ minus its boost $b(z)$, which depends
on the exact job size $z$ of the job. The boost function $b(z)$ is defined as 
\[b(z)  = \frac{1}{\gamma}\log(1-\exp(-\gamma z)),\] 
where $\gamma$ is the decay rate of the response time in a standard M/G/1 queue (which uses the FCFS policy).  
The proof method exists in linking the problem of
minimizing the prefactor to minimizing the weighted discounted cost of a finite batch scheduling problem.

The $\gamma$-Boost policy can also be used when the job size information that is available to the scheduler is more coarse \cite{boostZiv}.
For instance, suppose jobs are partitioned into a number of types and the scheduler only knows the job type of an
incoming job. In such case the boosted arrival time is still defined as the actual arrival time minus the boost, but the
boost of a job now depends on its type $i$ as opposed to its exact job size $z$. The boost of a type-$i$ job 
when using $\gamma$-Boost is defined as
\[b_i  = \frac{1}{\gamma}\log \left( \frac{\tilde S_i(-\gamma) }{\tilde S_i(-\gamma)-1} \right),\] 
where $\tilde S_i(s)$ is the Laplace transform of the job size distribution of a type-$i$ job. In this coarser information setting, $\gamma$-Boost
was shown to achieve the lowest prefactor among all boost policies, meaning the choice of the boost value $b_i$
is optimal. In case of $2$ job types, $\gamma$-Boost was also shown to achieve a lower prefactor than the Nudge-$M$
policy (see further for a definition). The $\gamma$-Boost policy is believed to be tail optimal among all scheduling policies in this partial information setting, but a proof is still lacking. Furthermore, $\gamma$-Boost not only minimizes the tail
of the response time, but often offers a stochastic improvement over FCFS, meaning it is also relevant in practice.

\subsection*{Nudge Policies Boost Arrival Counters}
After the introduction of the initial Nudge policy, a number of Nudge variations were introduced that further reduced the prefactor in case of class I job size distributions. This was done in a context with $2$ job types, where the scheduler knows the type of an incoming job, but not its exact size or arrival time. This is in contrast to the initial Nudge policy which does make use of job sizes. The current state-of-the-art Nudge policy, initially called Nudge-$M$
\cite{charlet2024nudgeM}, simply states that
whenever a type-$1$ arrival joins the queue it is served before any type-$2$ job that (a) is still waiting in the queue and that
(b) was one of the previous $M$ arrivals (of any type). The Nudge-$M$ policy was proven to minimize the
prefactor in a large class of
Nudge policies if the parameter $M$ is set as 
\[ M_{opt} = 
\left\lfloor \frac{1}{\log \tilde S(-\gamma)} \log \left( \frac{\tilde S_1(-\gamma) (\tilde S_2(-\gamma)-1)}{\tilde S_2(-\gamma) (\tilde S_1(-\gamma)-1)}
\right) \right\rfloor,\]
provided that the types are labeled such that $\tilde S_1(-\gamma) \leq \tilde S_2(-\gamma)$, where $\tS{} = p \tS{1} + (1-p) \tS{2}$ is the Laplace transform of the job size distribution $S$ of a random job, and $p$ is the probability of a type-1 job.
This formula resembles the earlier expression for the boost $b_i$. In fact, $M_{opt}$ can be rewritten as 
\[ M_{opt} = 
\left\lfloor \frac{1}{\log \tilde S(-\gamma)}\log \left( \frac{\tilde S_1(-\gamma) }{\tilde S_1(-\gamma)-1} 
\right)-\frac{1}{\log \tilde S(-\gamma)} \log \left( \frac{\tilde S_2(-\gamma) }{\tilde S_2(-\gamma)-1} 
\right) \right\rfloor,\]
meaning if we define $b_i^* = \log (\tilde S_i(-\gamma)/(\tilde S_i(-\gamma)-1)) / \log \tilde S(-\gamma)$, then
$M_{opt} = \lfloor b_1^* - b_2^* \rfloor$. We state that the {\it arrival counter} of a job equals $n$ if it is the
$n$-th job arrival (throughout the paper we often use relative arrival counters as only the difference matters).
As any arriving type-$1$ job can pass up to $M$ type-$2$ jobs that were part of the last $M$ arrivals, we can therefore think
of the Nudge-$M$ policy as a scheduling policy that schedules jobs based on a {\it boosted arrival counter}, 
where $b^*_i$ is the counter boost of a type-$i$ job. Indeed, for two job types
this is the same as boosting type-$1$ jobs by $b_1^*-b_2^*$ and type-$2$ jobs by zero.
As such we refer to the Nudge-$M$ policy as a {\it CounterBoost} policy from here onward.

\subsection*{Challenges when working with $d > 2$ types}
The observation that the Nudge-$M$ policy in the context of $2$ job types can be regarded as a scheduling policy that uses
a boosted arrival counter immediately yields a natural generalization to a context with $d > 2$ job types 
(possibly even infinitely many). In case 
we have $d$ types, we simply define the counter boost of a type $i$ job as 
\[b_i^* = \frac{1}{\log \tS{}} \log \left(\frac{\tS{i}}{\tS{i}-1}\right),\] 
for all $i$. We refer to this CounterBoost policy as $\gamma$-CounterBoost.
This policy coincides with Nudge-$M$ in case of $2$ job types (when $M=M_{opt}$). 

The Nudge-$M$ policy was shown to minimize the
tail among a large class $\mathcal{F}$ of Nudge policies \cite{charlet2024nudgeM}. An obvious question is 
whether this result generalizes to $d > 2$ types. 
The class of Nudge policies $\mathcal{F}$ considered in the two job type setting was such that each policy 
in $\mathcal{F}$ was characterized by
a constant $M$ and some function $n$. The function $n$ maps each sequence $s \in \{1,2\}^M$ of job types to $n(s) \in \{0,\ldots,t(s)\}$, where $n(s)$ specifies how many type-$2$ jobs a type-$1$ job can pass when the types of the last $M$ arrivals are given by $s$ and $t(s)$ counts the number of twos in $s$. 
In this case one additional condition must hold on $n$ such that the policy characterized by $n$ is feasible. 
The feasibility is related to the fact that whenever some type-$1$ job with arrival time $t_a$ passes a type-$2$ job
with arrival time $t_b$, any type-$1$ job that arrives in $(t_b,t_a)$ must also pass the type-$2$ job.

A natural way to generalize this class of Nudge policies to $d > 2$ types would be to replace the function $n$ by a 
set of functions $n_{ij}$, where $n_{ij}(s)$ indicates how many type-$j$ jobs an arriving type-$i$ job can pass given
that the previous $M$ arrival types are given by $s$. A first difficulty that arises here is that the conditions needed
for feasibility are much more involved, even for $d=3$ types. To prove that $\gamma$-CounterBoost is optimal in this class of Nudge policies, one could nevertheless try to use the same proof technique as in the two type setting.
While some of the initial results in \cite{charlet2024nudgeM} can be generalized, the final step in the proof technique appears 
problematic. In this step one needs to find for each Nudge policy $\texttt{A}$ in this class a finite sequence of Nudge 
policies $(\texttt{A}_k)_{k=1}^m$ such that $\texttt{A}_1 = \texttt{A}$, $\texttt{A}_m$ is $\gamma$-CounterBoost and the prefactor of $\texttt{A}_k$ is non-increasing in $k$. While it is easy to define a sequence
of $n_{ij}$ functions to transform the $n_{ij}$ functions of $\texttt{A}_1$ into those of $\gamma$-CounterBoost, the problem is
finding a sequence of $n_{ij}$ functions such that each set of $n_{ij}$ functions corresponds to a feasible policy, 
which is necessary to prove that the prefactor reduces when increasing $k$. For this reason a new approach is needed 
to prove optimality of $\gamma$-CounterBoost in a large class of CounterBoost policies when $d > 2$.

\subsection*{Contextual CounterBoost: Solving the Feasibility Problem}

A first contribution in this paper is the introduction of a new class of CounterBoost policies called
{\it Contextual CounterBoost} (CCB) policies. A CCB policy is characterized by a context window size $M$ and a boost function $b$. 
The boost function maps the strings $s$ holding the types of the previous and possibly $M$ next arrivals to a boost value. 
Whenever the server needs to select
a new job for service, it computes the boosted arrival counter for each waiting job by subtracting the boost of the job from its arrival counter.  The job with the smallest boosted arrival counter is served next, where ties are broken in favor of
jobs with a larger boost. Note that $\gamma$-CounterBoost is a CCB policy with a context window of size $M=0$ as the
types of the previous and next jobs do not impact its arrival counter boost.

The main advantage of working with the class of CCB policies as opposed to the set $\mathcal{F}$ of policies characterized
by the $n_{ij}$ functions mentioned before, is that any boost function $b$ corresponds to a feasible policy. In addition,
the class of CCB policies can be shown to be broader than $\mathcal{F}$, 
as for any policy $\texttt{A} \in \mathcal{F}$ one can define
a boost function such that its associated CCB policy schedules the jobs identical to policy $\texttt{A}$.

\subsection*{Main Technical Contribution}
Our main objective exists in proving 
that $\gamma$-CounterBoost minimizes the prefactor within the
class of CCB policies. This means that using a context window size $M$ larger than zero does not offer any
reduction in the prefactor.
To establish this result we show that starting from any CCB policy with boost 
function $b^{(1)}$, we can find a 
sequence of boost functions $(b^{(k)})_{k=1}^m$ with $b^{(m)}$ the boost function of $\gamma$-CounterBoost such that the prefactor of the CCB policy associated with boost function $b^{(k)}$ is non-increasing in $k$.

In order to make sure that the prefactor does not worsen when increasing $k$, it is key to make the difference between
two boost functions $b^{(k)}$ and $b^{(k+1)}$ as small as possible. The Single Swap Theorem that we prove in Section \ref{sec:swaptheorem}
gives a simple condition on when replacing a boost function $b$ by $b'$ improves the prefactor provided that 
the only difference between $b$ and $b'$ is that a type-$i$ job that sees a specific context  will pass a type-$j$
job that also sees a specific (possibly overlapping) context when using $b'$, while it did not using $b$.
The next step, which is the more challenging one, exists in showing that one can find the required sequence
of boost functions to transform any CCB policy in $\gamma$-CounterBoost such that we can
apply our Single Swap Theorem to show that the prefactor is non-increasing.

The key technical novelty required to find this sequence of boost functions lies in replacing the boost function $b$ of a
CCB policy by some other boost function $\hat{b}$ such that the associated CCB policy is still the same, but
the boost values $\hat{b}(s)$ are all independent of each other over the field of the rational numbers. 
Further, as adding a constant to all the boost values does not alter a CCB policy, we can increase all the
$b_i^*$ values by a sufficiently large $\Delta$ such that we may assume that the initial boost values of $b^{(1)}(s)$
are all below the boosts of $\gamma$-CounterBoost. The sequence of boosts is then constructed such that all the boosts
$b(s)$ simultaneously increase towards their target. The independence over the rationals makes sure that
each time that the CCB policy changes, it is a change for which we can show that it does not worsen the prefactor
due to the Single Swap Theorem.

\subsection*{Practical Issues when Implementing $\gamma$-CounterBoost}

The boost function $b^*$ of $\gamma$-CounterBoost requires knowing the values of $\tS{i}$,
for $i=1,\ldots, d$, which may not
be easy to estimate accurately in practice. We therefore include a heavy traffic approximation $b^{heavy}$
for the boost function $b^*$ in Section \ref{sec:heavytraffic} that only requires estimating 
(i) the arrival rate, (ii) the mean job sizes per type, and (iii) the second moment of the overall job size distribution.
We numerically show that the boost function $b^{heavy}$ has a prefactor that is fairly close 
to the prefactor of $\gamma$-CounterBoost and mathematically prove that the prefactors coincide as
$\lambda$ tends to one. This result is a generalization of the result in \cite{charlet2024nudgeM} for $d=2$ types and our
proof for $d \geq 2$ is a natural generalization.

\subsection*{Paper Organization}
The paper is organized as follows. The model under consideration and the class of CounterBoost policies are introduced in Section \ref{sec:model}. The class of Contextual CounterBoost policies is introduced in Section \ref{sec:CCB}, while Section \ref{sec:swaptheorem} and Section \ref{sec:opt} prove the Single Swap Theorem and the optimality of $\gamma$-CounterBoost
within the class of CCB policies, respectively. In Section \ref{sec:heavytraffic} we present a heavy traffic approximation for the optimal boost $b^*$ and show that $\gamma$-CounterBoost and $\gamma$-Boost have the same prefactor as the load tends to one. Numerical results are presented in Section \ref{sec:num} and conclusions
are drawn in Section \ref{sec:conc}.

\section{System Model and CounterBoost Policies}\label{sec:model}
We consider an M/G/1 queue with arrival rate $\lambda$ and a class I job size distribution $S$, 
which includes most well-behaved light-tailed distributions \cite{abate1}, such as all phase-type
distributions \cite{latouche1,HarcholBalter2013} or distributions with a finite support. The mean job size is assumed to be one, such that
the load of the queue equals $\lambda$. For such an M/G/1 queue the response time $R_{FCFS}$ under FCFS scheduling
is known to decay exponentially:
\[ P[R_{FCFS}> t]  \approx c_{FCFS} e^{-\gamma t},\]
where $\gamma$ is the decay rate and $c_{FCFS}$ the FCFS prefactor. Let $\tilde S(-\dr)$ be the Laplace transform
of the job size distribution $S$ evaluated in $-\dr$.

Jobs are assumed to be partitioned into $d$ types and a random job is of
type $k$ with probability $p_k$, where $\sum_{k=1}^d p_k=1$. Job types of consecutive jobs are independent.
The size of a type-$k$ job is denoted as $S_k$ and its Laplace transform evaluated in $-\dr$ is denoted
as $\tilde S_k(-\dr)$. Note that $\tilde S(-\dr) = \sum_{k=1}^d p_k \tilde S_k(-\dr)$. For further use define
for $j=1,\ldots,d+1$:
\[ \tilde T_j(-\dr) = \sum_{k=1}^d p_k \tilde S_k(-\dr)^{1(k< j)},\]
where $1(A)=1$ if $A$ is true and $1(A)=0$ otherwise. Note that $\tilde T_1(-\dr)=1$ and $\tilde T_{d+1}(-\dr)=\tilde S(-\dr)$. 

\begin{definition}
    A CounterBoost policy $\pi$ with boosts $(b_i)_{i=1}^d$ is a policy that schedules jobs in order of increasing boosted arrival counter, which is the arrival counter of the job minus its boost $b_i$ for a type-$i$ job. Ties are broken in favor of the job with the larger boost. Define $N_b(i, j) = (\lfloor b_i - b_j \rfloor)^+$ as the number of previous arrivals in which an arriving type-$i$ job can pass type-$j$ jobs. In other words, when a type-$i$ job arrives, it is served before any type-$j$ job that arrived in the previous $N_b(i, j)$ arrivals, and is still waiting in the queue when the type-$i$ job arrives. The behavior of a CounterBoost policy is fully determined by the function $N_b$, and many boosts exist that produce the same policy. 
\end{definition}

In particular we focus on one policy, called $\gamma$-CounterBoost, for which we show that it is tail optimal within a large family of policies.

\begin{definition}
    $\gamma$-CounterBoost is a CounterBoost policy where a type-$i$ job receives a boost of 
    $$b_i^* = \frac{1}{\log \tS{}} \log \left(\frac{\tS{i}}{\tS{i}-1}\right).$$
    We denote $N^*(i, j) = N_{b^*}(i, j) = (\lfloor b^*_i - b^*_j \rfloor)^+$ for later use.
\end{definition}

When calculating the tail constant of a policy, we will often use the final value theorem and the 
properties of the Laplace transform. Assume $Y=Z + X$ with
$Z$ and $X$ independent, where $Y$ and $Z$
decay exponentially at the same rate $\dr$, meaning $P[Y > t] \sim c_Y e^{-\dr t}$ and
$P[Z > t] \sim c_Z e^{-\dr t}$ for some constant $c_Y$ and $c_Z$. Let $\tilde{X}$, $\tilde{Y}$, and $\tilde{Z}$ denote the Laplace transforms of $X$, $Y$, and $Z$ respectively. Then,
\begin{align*}
    c_Y &= \lim_{t \rightarrow \infty} e^{\dr t} P[Y > t] =
 \lim_{s \rightarrow 0} \frac{1}{\dr} s\tilde Y(s-\dr) 
 = \lim_{s \rightarrow 0} \frac{1}{\dr} s\tilde Z(s-\dr)\tilde X(s-\dr) = c_Z \tilde X(-\dr).
\end{align*}
This shows that adding work to a job's waiting time results in its constant being multiplied by the Laplace transform of the amount of work $X$ that is added. When a job is passed by another job, such a factor is added to the passed job's constant. We also use this property to condition the constant on previous arrivals. For example, we define $c_Z$ as the constant of the workload seen by a random job arrival. If we know that the job before a tagged job is of type $i$, then the prefactor of
the waiting time of the tagged job equals $c_Z \tS{i}/\tS{}$. This can be viewed as removing a random job and then adding a type-$i$ job.

\begin{theorem}\label{th:cboostcontextless}
    Let $c_{W^{(i)}}(b)$ be the prefactor of the waiting time of a type-$i$ job under a CounterBoost policy with boosts $b$. If we relabel the job types such that the boosts satisfy $b_1 \geq b_2 \geq \ldots \geq b_d$, then
    \begin{align}
        c_{W^{(i)}}(b) &= c_Z \prod_{j=2}^{i} \left(\tilde T_j(-\dr)\right)^{N_b(j-1, i) - N_b(j, i)} \prod_{j=i+1}^{d} \left(\frac{\tilde T_j(-\dr)}{\tS{}}\right)^{N_b(i, j) - N_b(i, j-1)} \label{eq:cWicontextless}
    \end{align}
    for $i=1,\ldots,d$, where $c_Z = c_{FCFS}/\tS{}$.
\end{theorem}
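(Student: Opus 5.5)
We decompose the waiting time of a tagged type-$i$ job into the FCFS-like baseline plus corrections caused by passing, and use the Laplace transform property stated just before the theorem. The baseline is the work $Z$ seen by an arriving job, with prefactor $c_Z=c_{FCFS}/\tS{}$, where the division removes the tagged job's own service. Under the CounterBoost policy, the tagged type-$i$ job waits for the work present at its arrival, minus the work of jobs it passes, plus the work of later arrivals that pass it. It passes a type-$j$ job (with $j>i$, so $b_j\le b_i$) exactly when that job is among the previous $N_b(i,j)$ arrivals and still waiting. It is passed by a later type-$j$ job (with $j<i$) exactly when that job arrives within the next $N_b(j,i)$ arrivals while the tagged job is still waiting. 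In the tail regime (large $t$), all these jobs are still waiting with probability tending to one, so their presence can be treated as certain, as in the two-type Nudge-$M$ analysis of \cite{charlet2024nudgeM}.

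\textbf{Arrivals after the tagged job.} Condition on the types of the next arrivals, which are independent of the tagged job's history. The $k$-th next arrival passes the tagged job iff it has type $j<i$ with $N_b(j,i)\ge k$. Because $b_1\ge\dots\ge b_d$, $N_b(j,i)$ is nonincreasing in $j$. Hence the set of types that pass at position $k$ is $\{1,\dots,j-1\}$ for the appropriate $j$, namely for $N_b(j,i)< k\le N_b(j-1,i)$. The multiplicative factor for that position is
\[\sum_{m<j}p_m\tS{m}+\sum_{m\ge j}p_m=\tT{j}.\]
Independence across positions gives the product $\prod_{j=2}^i \tT{j}^{\,N_b(j-1,i)-N_b(j,i)}$; here $N_b(i,i)=0$ closes the telescoping range.

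\textbf{Arrivals before the tagged job.} The $k$-th previous job is passed iff its type is $j>i$ with $N_b(i,j)\ge k$. Since $N_b(i,j)$ is nondecreasing in $j$, for $N_b(i,j-1)<k\le N_b(i,j)$ exactly the types $\ge j$ are removed. Conditioning the arrival-seen workload on that job's type and removing its work when passed replaces the factor $\tS{}$, contained in $c_Z$, by
\[\sum_{m<j}p_m\tS{m}+\sum_{m\ge j}p_m=\tT{j},\]
so the position contributes a factor $\tT{j}/\tS{}$.

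\textbf{Main obstacle.} The delicate step is justifying this conditioning for the previous arrivals. The workload seen by the tagged job is not literally a sum of independent per-job pieces. Using the device described before the theorem, it decomposes as (workload present before the last $N$ arrivals) plus (the service times of the last $N$ jobs) minus (elapsed interarrival-time effects). In the Laplace domain at $s\to-\dr$, the interarrival and service factors of each of the last $N$ positions combine to the neutral factor $\tS{}\cdot\lambda/(\lambda-\dr)=1$, since $\dr$ solves $\tS{}\,\lambda/(\lambda-\dr)=1$. Swapping a random job for a conditioned one therefore multiplies by $(\text{new factor})/\tS{}$. This also requires that the passed jobs are still waiting, which holds with probability tending to one on the exponential scale, and that the prefix and suffix contributions are independent given types. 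Multiplying the three factors yields \eqref{eq:cWicontextless}.
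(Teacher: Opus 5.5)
Your proposal is correct and follows essentially the same route as the paper. The paper first removes the last $N_b(1,d)$ arrivals (a factor $1/\tilde S(-\gamma)$ each) and then adds back the unpassed types (a factor $\tilde T_j(-\gamma)$ per position), which is exactly your per-position factor $\tilde T_j(-\gamma)/\tilde S(-\gamma)$ for earlier arrivals. It uses the same $\tilde T_j(-\gamma)$ factors for later arrivals that overtake, under the same tail-regime assumption that all relevant jobs are still waiting.

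One small slip in your justification: the interarrival factor is $E[e^{-\gamma A}]=\lambda/(\lambda+\gamma)$. The neutral identity is therefore $\tilde S(-\gamma)\,\lambda/(\lambda+\gamma)=1$, i.e.\ $\tilde S(-\gamma)=1+\gamma/\lambda$, not $\tilde S(-\gamma)\,\lambda/(\lambda-\gamma)=1$.
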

\begin{proof}
 The proof makes repeated use of the final value theorem.
    Due to the class I job size distribution, we may assume that there are enough jobs present in the queue such that all the jobs that arrived among the last 
    $N_b(1, d)$ are still waiting in the queue when considering $c_{W^{(i)}}(b)$. This can be argued in the same way as in the proofs for tail constants in \cite{charlet2024nudgeM}.
    For $c_{W^{(1)}}$ we consider the FCFS workload without the last $N_b(1, d)$ arrivals, which contributes a factor $c_Z/\tS{}^{N_b(1, d)}$ to $c_{W^{(1)}}$. We stress that this is not removing the last $N_b(1, d)$ jobs in the queue (whose types are reordered by the policy), but the last $N_b(1, d)$ arrivals, which have the same type distribution for any policy.
    For the $N_b(1, 2)$ most recent arrivals, we only need to add the work of the type-$1$ arrivals, 
    as $N_b(1, 2) \leq N_b(1, k)$ for $k > 2$, which yields a factor
    \[
    \left( p_1 \tS{1} + (1-p_1) \right)^{N_b(1, 2)} = \tT{2}^{N_b(1,2)}.
    \]
    For the next $N_b(1, 3)-N_b(1, 2)$ arrivals we need to add the work of both type-$1$ and type-$2$ arrivals,
    which results in a factor
    \[
    \left( p_1 \tS{1} + p_2 \tS{2} + (1-p_1-p_2) \right)^{N_b(1, 3)-N_b(1, 2)} = \tT{3}^{N_b(1,3)-N_b(1,2)}.
    \]
    In general for the arrivals in position $N_b(1, j-1)+1$ to $N_b(1, j)$, we add the work of the type-$1$ to $j-1$
    jobs and this creates a factor
    \[
    \left( \sum_{k=1}^{j-1} p_k \tS{k} + \sum_{k=j}^d p_k \right)^{N_b(1, j) - N_b(1, j-1)} = \tT{j}^{N_b(1,j)-N_b(1,j-1)}.
    \]
    Combining these factors (for $j=2$ to $d$) with $c_Z/\tS{}^{N_b(1, d)}$ shows that \eqref{eq:cWicontextless} holds for $i=1$.
    
    The prefactor $c_{W^{(d)}}$ is equal to $c_Z$ multiplied by the factors that represent the work
    that passes a type-$d$ job. Any type-$k$ $< d$ arrival in the next $N_b(d-1, d)$ arrivals passes the type-$d$ job
    as $N_b(k, d) \geq N_b(d-1, d)$ for $k < d$,
    yielding a factor
    \[ \left( \sum_{k=1}^{d-1} p_k \tS{k} + p_d \right)^{N_b(d-1, d)} = \tT{d}^{N_b(d-1, d)}.
    \]
    The type-$k$ $< d-1$ jobs in the next $N_b(d-2, d) - N_b(d-1, d)$ arrivals also pass the type-$d$ job,
    which gives rise to a factor
    \[ \left( \sum_{k=1}^{d-2} p_k \tS{k} + p_{d-1} + p_d \right)^{N_b(d-2, d) - N_b(d-1, d)}
    = \tT{d-1}^{N_b(d-2, d) - N_b(d-1, d)}.
    \]
    In general all the type-$k$  $ < j$ jobs in a set of $N_b(j-1, d) - N_b(j, d)$ arrivals pass the type-$d$ job and
    this implies that
    \begin{align*}
        c_{W^{(d)}}(b) &= c_Z  \prod_{j=2}^{d} \tT{j}^{N_b(j-1, d) - N_b(j, d)}. 
    \end{align*}
    Hence, \eqref{eq:cWicontextless} holds for $i=d$.
    For type-$i$ jobs with $1 < i < d$, we get a combination of the above: some type $1$ to $i-1$ pass (these are among the
    next $N_b(1, i)$ arrivals) and some type $i+1$ to $d$ jobs are passed (belonging to the last $N_b(i, d)$ arrivals).
    Reasoning in the same way as above yields \eqref{eq:cWicontextless}.
\end{proof}

Using this theorem, the prefactor $c_{\gamma-CounterBoost}$ of $\gamma$-CounterBoost can be expressed as
\[c_{\gamma-CounterBoost}=\sum_{i=1}^d p_i c_{W^{(i)}}(b^*) \tS{i}.\]

\section{Contextual CounterBoost policies: definition and results}\label{sec:CCB}

We now introduce the class of Contextual CounterBoost policies. In such a policy
the counter boost of a job may not only depend on its own job type, but also on the types of the previous $M$ and next $M$
arrivals. As some of the next $M$ job arrivals may still need to occur when the scheduler needs to select a job for service,
we also specify the counter boost of a job when only $\ell \in \{0,\dots,M-1\}$ of the next $M$ job arrivals have occurred.
These have however no impact on the prefactor of the policy.

\begin{definition}
    A Contextual CounterBoost (CCB) policy $\pi$ with context window $M \in \mathbb{Z}^+$ is characterized by a boost function $b$ which takes as arguments the current arrival type, the previous $M$ arrival types, and $0 \leq \ell \leq M$ future arrival types as context. Denote the set $\{1,\ldots,d\}$ as $[d]$. These three arguments are specified by a string $s \in {[d]}^{M+1+\ell}$ where $s_1$ is the type of the most recent arrival, $s_{\ell+1}$ is the type of the job receiving the boost, and $s_{M+1+\ell}$ is the type of the oldest arrival
    among these $M+1+\ell$ arrivals: 
    \begin{align}
        b: \bigcup_{\ell=0}^M {[d]}^{M+1+\ell} \rightarrow \mathbb{R}. 
    \end{align}
    When the server starts serving a new job, it looks at the boosted arrival counter of all jobs waiting in the queue, which is equal to their arrival counter minus their boost. This boost is given by the boost function, determined by the job type and the previous $M$ types, as well as up to $M$ future types, depending on how many jobs have arrived since. As before, jobs are served in order of boosted arrival counter, and in case multiple jobs have the same boosted counter, ties are broken by giving priority to more recent arrivals (or equivalently, arrivals with a larger boost).
\end{definition}

\begin{definition}
    The class of Contextual CounterBoost policies with context window $M$ is denoted as $\mathcal{G}_M$, while the "Contextless" CounterBoost policies are $\mathcal{G}_0$.
\end{definition}

As the set of strings on which the boost function $b$ acts is finite, we can define $U = \lfloor \max_s b(s)-\min_s b(s) \rfloor = 
\max_{s,s'} N_b(s,s')$, where
$N_b(s,s')= (\lfloor b(s)-b(s')\rfloor)^+$. 
We further define the operator $\crop{k}$ that maps a string $ v=v_{-L_1} \ldots v_{L_2}$ of length $L=L_1+1+L_2 \geq 2M+1$ to the length $2M+1$ substring $v_{k-M}\ldots v_{k+M}$ with $M-L_{1} \leq k \leq L_2-M$.

\begin{theorem}\label{th:boostcontext}
    Let $c_{W^{(i)}}(b)$ be the prefactor of the waiting time of a type-$i$ job under a Contextual CounterBoost policy $\pi_{M}$ with boost function $b$, then:
      \begin{align}
        c_{W^{(i)}}(b) = \frac{c_Z}{\tS{}^{M+U}} &
         \sum_{\substack{v \in {[d]}^{2M+2U+1} \\ v_{0} = i}} \frac{p(v)}{p_i}  \prod_{k=1}^{U} \tS{v_k}^{1(N_b(s, \crop{k} v) < k)}  \nonumber \\
        &\hspace{-1cm}\cdot \prod_{\ell=1}^{U} \tS{v_{-\ell}}^{1 - 1(N_b(\crop{(-\ell)} v, s) < \ell)} \cdot \prod_{m=U+1}^{U+M} \tS{v_m} \label{eq:cWi},
    \end{align}
    with $v=v_{-(U+M)}\ldots v_{U+M}$, $p(v) = \prod_{k=-(U+M)}^{U+M} p_{v_k}$ and $s=\crop{0} v$ is the context of the tagged type-$i$ job with index $0$.
    The prefactor $c_{CCB}(b)$ is given by
    \begin{align}
        c_{CCB}(b) = \sum_{i=1}^d p_i c_{W^{(i)}}(b) \tS{i}.
    \end{align}
\end{theorem}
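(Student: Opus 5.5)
We follow the same route as the proof of Theorem~\ref{th:cboostcontextless}: use the final value theorem repeatedly and condition on the types of the arrivals surrounding a tagged type-$i$ job. The difference is that the boosts now depend on context, so we cannot argue position by position with a single threshold $N_b(j,i)$. Instead, we condition on the whole string $v=v_{-(U+M)}\ldots v_{U+M}$ of types. Here $v_0=i$ is the tagged job, $v_1,\ldots,v_{U+M}$ are the $U+M$ arrivals preceding it, and $v_{-1},\ldots,v_{-(U+M)}$ are the $U+M$ arrivals following it. Given $v_0=i$, the other types are i.i.d., so the conditional probability of $v$ is $p(v)/p_i$. Summing over $v$ then gives the outer sum in \eqref{eq:cWi}.

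\textbf{Step 1: the reference workload.} As in the proof of Theorem~\ref{th:cboostcontextless}, the class I assumption lets us assume that the queue is long when the tagged job waits. In particular, all jobs among the $U+M$ arrivals on either side of the tagged job are still waiting while the ordering among them is decided. We start from the FCFS workload seen by the tagged arrival, whose prefactor is $c_Z$. We remove the work of the last $U+M$ arrivals, which gives the factor $1/\tS{}^{M+U}$, since these are random arrivals independent of the policy. Conditioning on their types $v_1,\ldots,v_{U+M}$ and adding back the work that the tagged job must still wait for then produces the remaining factors.

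\textbf{Step 2: accounting for each neighbouring job.}
\begin{itemize}
\item \emph{Earlier arrivals $v_k$ with $k\le U$.} Their full context $\crop{k}v$ is contained in $v$, because we kept $M$ extra positions on each side. The job at position $k$ has arrival counter $k$ smaller than the tagged job. The tagged job, with boost $b(s)$ and $s=\crop{0}v$, is served before it exactly when $b(s)-b(\crop{k}v)\ge k$, using the tie rule. Otherwise the tagged job waits for it. The condition ``waits for it'' is therefore $N_b(s,\crop{k}v)<k$, which gives the factor $\tS{v_k}^{1(N_b(s,\crop{k}v)<k)}$.
\item \emph{Earlier arrivals with $U<k\le U+M$.} No boost difference exceeds $U$, so these can never be overtaken, and they always contribute $\tS{v_m}$.
\item \emph{Later arrivals $v_{-\ell}$ with $\ell\le U$.} Such a job passes the tagged job iff $N_b(\crop{(-\ell)}v,s)\ge\ell$, contributing $\tS{v_{-\ell}}^{1-1(N_b(\crop{(-\ell)}v,s)<\ell)}$.
\item \emph{Later arrivals with $\ell>U$.} These never pass the tagged job.
\end{itemize}
The positions $\pm(U+1),\ldots,\pm(U+M)$ are included only to supply contexts; summing over their types leaves the expression consistent. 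The future-context boosts with fewer than $M$ future arrivals only matter when the queue is short, so they do not affect the tail. Multiplying these factors inside the conditional expectation and applying the final value theorem identity $c_Y=c_Z\tilde X(-\dr)$ yields \eqref{eq:cWi}.

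\textbf{Step 3: the total prefactor.} The response time of a type-$i$ job equals its waiting time plus an independent copy of $S_i$, so its prefactor is $c_{W^{(i)}}(b)\tS{i}$. Averaging over types with weights $p_i$ gives $c_{CCB}(b)$.

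\textbf{Main obstacle.} The delicate point is justifying that the passing decisions between the tagged job and each neighbour are exactly those given by the full-context boosts, and that nothing beyond distance $U$ interacts with it. The ordering is by boosted counter, which gives a total order. Whether the tagged job is served before a job $k$ positions away therefore depends only on these two boosted counters, regardless of the other jobs, provided both are waiting with final contexts. I would prove this via the heavy-queue argument cited from \cite{charlet2024nudgeM}: on the event that the waiting time is large, all relevant jobs are present and their $M$ future arrivals have occurred before the tagged job reaches service. I expect the exact tie-breaking convention at $b(s)-b(s')=k$ to need care, to confirm the strict versus non-strict inequality in the indicators.
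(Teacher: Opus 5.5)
Your proposal is correct and follows essentially the same route as the paper. You remove the last $U+M$ arrivals from $c_Z$ to get the factor $1/\tS{}^{M+U}$ and condition on the full string $v$ with $v_0=i$. You then add back the $U$ earlier jobs the tagged job does not pass, the $U$ later jobs that pass it, and the $M$ older context jobs that are always waited for. The tie rule gives exactly the indicators $1(N_b(s,\crop{k}v)<k)$ and $1-1(N_b(\crop{(-\ell)}v,s)<\ell)$. One small wording fix: boost differences need not be at most $U$, but they are strictly less than $U+1$, and that is what makes positions beyond distance $U$ irrelevant.
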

\begin{proof}
We first remove the work associated to the $M+U$ jobs that arrived before the tagged type-$i$ job from $c_Z$
using the final value theorem.
    The formula for $c_{W^{(i)}}(b)$ then sums over all possible strings $v$ of length $2M+2U+1$ with a tagged type-$i$ job in the center. 
    If we label the entries of $v$ as $v_{-(U+M)}\ldots v_{U+M}$, then the tagged
    type-$i$ job in the middle has index $0$. As the number of jobs that a job can pass is bounded by $U$ and the
    number of jobs that can pass a particular job is also bounded by $U$, it is not hard to prove that we may assume that 
    at least $M+U$ jobs are waiting in the queue when the tagged job arrived, and at least $M+U$ jobs arrive before any of these
    jobs starts service. 
    On both sides of the tagged type-$i$ job, we need to consider the types of $U$ jobs as those might pass the tagged job or be passed by the tagged type-$i$ job. In addition, an extra $M$ jobs need to be considered to provide context for these $U$ jobs' boost function. For each string  $v=v_{-(U+M)}\ldots v_{U+M}$, we first take the probability of this string, conditioned on a type-$i$ job in the middle. 
    Then three products add work back using the final value theorem:
    \begin{enumerate}
        \item The first product adds jobs back that are in the $U$ arrivals before the tagged type-$i$ job,  
        if the tagged job does not pass them.
        These jobs 
        have index $k \in \{1,\ldots,U\}$. The tagged type-$i$ job has context $s$ and index $0$,
        therefore the job with index $k$ and context $\crop{k} v$ is not passed by the tagged type-$i$ job if and only if  $b(s)- b(\crop{k} v) < k$. This condition can be restated as $N_b(s, \crop{k} v) < k$.
        \item The second product adds jobs that are part of the  $U$ arrivals after the tagged type-$i$ job, if they pass the tagged job. These jobs have index $-\ell$ with $\ell \in \{1,\ldots,U\}$. 
        The job with index $-\ell$ passes the tagged type-$i$ job if and only if
        $b(\crop{(-\ell)} v)-b(s) \geq \ell$, which means it does not pass the tagged type-$i$ job if and only
        if $N_b(\crop{(-\ell)} v,s) < \ell$.
        \item The third product adds the work of the $M$ jobs that arrived 
        more than $U$ jobs before the tagged type-$i$ job. These jobs 
        have index $m \in \{U+1,\ldots,U+M\}$ and are always added as the tagged job can not pass them.
    \end{enumerate}
    The constant $c_{CCB}(b)$ is a weighted average of $c_{R^{(i)}}(b) = c_{W^{(i)}}(b) \tS{i}$.
\end{proof}

Note that this constant is only affected by values of the boost function for strings $s$ of length $2M+1$, i.e. the boosts of jobs when fewer than $M$ future arrivals have occurred do not affect the constant $c_{CCB}(b)$.

\section{The Single Swap Theorem}\label{sec:swaptheorem}

In this section we prove the Single Swap Theorem, that is applied repeatedly in the next section to prove tail optimality of the $\gamma$-CounterBoost policy among all CCB policies:

\begin{theorem}[Single Swap Theorem] \label{th:swap}
    Consider a CCB policy that is modified such that for some context $s$ with $s_{M+1} = i$ and  some other context $s'$ with $s'_{M+1} = j$, $N_b(s,s')= (\lfloor b(s) - b(s') \rfloor)^+$ increases from $k-1$ to $k \geq 1$, while all other $N_b$-values remain the same. In other words, a type-$j$ job with context $s'$ that arrived $k$ jobs before a type-$i$ job with context $s$ was initially not passed, but it is after the modification. If this  modification changes the prefactor, then it reduces the prefactor if and only if $k \leq N^*(i,j)$.
\end{theorem}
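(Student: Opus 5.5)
The plan is to reduce the change in $c_{CCB}(b)$ to a comparison between two jobs only: the type-$i$ job with context $s$ and the type-$j$ job with context $s'$ that arrived $k$ positions earlier. We start from Theorem~\ref{th:boostcontext}. Because only the single value $N_b(s,s')$ changes, the only terms of \eqref{eq:cWi} that change are those whose window $v$ contains such a pair at distance $k$. There are two kinds.

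\begin{itemize}
\item \textbf{Tagged type-$j$ job.} The tagged job at index $0$ has context $s'=\crop{0}v$, and the job at index $-k$ has context $s=\crop{(-k)}v$. Here the second product picks up a new factor $\tS{i}$.
\item \textbf{Tagged type-$i$ job.} The tagged job has context $s$, and the job at index $k$ has context $s'$. Here the first product loses the factor $\tS{j}$.
\end{itemize}

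All other factors of both terms are left unchanged, since all other $N_b$-values are unchanged.

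Rather than matching terms of \eqref{eq:cWi} one by one, I would argue at the level of sample paths, which is cleaner and is what the final value theorem computations encode. In the regime relevant for the prefactor, both jobs are waiting, and the modification only interchanges their service order. Every other job sees exactly the same set of jobs served ahead of it, because no other $N_b$-value changes. So the contributions of all other jobs to $c_{CCB}$ are identical before and after.

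For the two affected jobs, I condition on everything except their own sizes. Let $G$ be the common factor: the prefactor contribution of the workload that both jobs must wait for, together with the factors from the other $2M$-context jobs. I would obtain $G$ by the same conditioning on $v$ and context probabilities as in Theorem~\ref{th:boostcontext}, and it is the same for both orders. Call the type-$j$ job the older one; the type-$i$ job arrives $k$ arrivals later. Shifting its arrival by $k$ interarrival times divides its constant by $\tS{}^{k}$: by the final value theorem argument, removing $k$ random arrivals contributes $\tS{}^{-k}$, as in the proof of Theorem~\ref{th:cboostcontextless}. This gives the four response-time prefactor contributions:
\begin{align*}
\text{before: }& G\,\tS{j} + G\,\tS{j}\tS{i}\tS{}^{-k},\\
\text{after: }& G\,\tS{i}\tS{j} + G\,\tS{i}\tS{}^{-k}.
\end{align*}
The difference is
\[
G\left[\tS{j}(\tS{i}-1) - \tS{}^{-k}\tS{i}(\tS{j}-1)\right].
\]

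Since $G>0$, this difference is $\le 0$ exactly when $\tS{}^{k} \le \tS{i}(\tS{j}-1)/\big(\tS{j}(\tS{i}-1)\big)$. Taking logarithms (note $\tS{}>1$) turns this into $k \le b_i^*-b_j^*$. Because $k$ is an integer and $k\ge 1$, this is equivalent to $k\le (\lfloor b_i^*-b_j^*\rfloor)^+=N^*(i,j)$. When the prefactor actually changes we have $G>0$, the inequality is strict, and so the prefactor reduces if and only if $k\le N^*(i,j)$. The edge case where $b_i^*-b_j^*$ is itself an integer equal to $k$ gives equality; this is consistent with the hypothesis ``if this modification changes the prefactor''.

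The main obstacle is making rigorous that the two terms share the same factor $G$. The intermediate jobs at indices strictly between $0$ and $k$, and the jobs in the context windows, have pass/no-pass relations with an $i$-job and a $j$-job that could in principle differ. I would handle this by summing \eqref{eq:cWi} over windows $v$ of length $2M+2U+1+k$ that contain the pair. In that sum the tagged-$j$ term and the tagged-$i$ term correspond to the same extended window $v$, centred at different indices. For each such $v$, I would verify that every other factor is identical in both terms. The reason is that each other job's pass relation to the $i$-job and to the $j$-job enters the workload in front of both jobs in the same way, since both jobs are served consecutively in the relevant ordering. Everything except the $\tS{i}$, $\tS{j}$ and $\tS{}^{-k}$ factors then cancels, with a common nonnegative weight $G(v)$. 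Summing over $v$ gives the displayed difference with $G=\sum_v G(v)$.
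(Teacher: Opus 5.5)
Your proposal is correct and follows essentially the same route as the paper. The paper also pairs the tagged type-$i$ and type-$j$ terms over a common window $v$ of length $2M+2U+1+k$, uses $D_{v,0}=D_{v,k}$ to pull out a common positive factor, and reaches the same sign condition $\frac{\tS{i}}{\tS{}^{k}}(1-\tS{j})+\tS{j}(\tS{i}-1)<0$.

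The one step you assert rather than prove is that the two tagged jobs are served consecutively. The paper closes this with a transitivity argument: a third job served between them before the change would have to change its order relative to one of the two tagged jobs after the swap, which would mean a second $N_b$-value changes. Your explicit handling of the edge case $b_i^*-b_j^*=k$ is slightly more careful than the paper's.
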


\begin{proof}
If $k \leq 2M$ in the above theorem, then the last entries of $s$ must equal the first few entries of $s'$ (e.g., 
if $M=2$ and $k=3$, then the last two entries of $s$ match the first two of $s'$), otherwise the policy
does not change.

To prove our result we consider strings $v$ of length $2M+2U+k+1 = M + U + 1 + (k-1) + 1 + U + M$, where the two ones correspond to the tagged type-$i$ and type-$j$ job, the $U$ terms correspond to jobs that can pass or be passed by (at least) one of the tagged jobs, the $k-1$ are the jobs between the tagged jobs and the two $M$ values provide the necessary context.
Let $v = v_{-(U+M)} \ldots v_{U+M+k}$ be a string of length $2M+2U+1+k$ for $k \geq 1$. Let the tagged type-$i$ job with index $0$ have arrival counter 0 without loss of generality, meaning the tagged type-$j$ job with index $k$ has arrival counter $-k$. 

Define $D_{v,0}$  as the amount of work the jobs in $v$ \textit{excluding the tagged type-$j$ job with arrival counter $-k$} 
contribute to the waiting time of the tagged type-$i$ arrival in $v$ with arrival counter $0$.
Similarly, define $D_{v,k}$  as the amount of work the jobs in $v$ \textit{excluding the type-$i$ tagged job with arrival counter $0$} 
contribute to the waiting time of the tagged type-$j$ arrival in $v$ with arrival counter $-k$.

Note that the boost of the first and last $M$ jobs  in a string $v$ is unknown as the context of these $2M$ jobs is not fully included in $v$. This poses no problem as these jobs are only used to provide the context to the intermediate jobs of $v$.
Due to the definition of $U$,  jobs in the first $M$ positions of $v$ never add work to a tagged job with arrival counter in $[-k,0]$. Similarly the $M$ jobs in the last $M$ positions always add work to a tagged job with an arrival counter in $[-k,0]$.
Any job in between (these are jobs with index $\ell \in \{-U,\ldots,k+1+U\} \setminus \{0, k\}$) contributes to $D_{v,0}$ if $-b(\mathcal{W}_0 v)$ is larger than that job's boosted arrival counter given by $(-\ell) - b(\crop{\ell}v)$ and to
$D_{v,k}$ if $(-k)-b(\mathcal{W}_k v)$ is larger than that job's boosted arrival counter given by $(-\ell) - b(\crop{\ell}v)$. 

Using these observations, we now present two alternate representations for \eqref{eq:cWi} using strings of length $2M+2U+k+1$, that is, 
the waiting time constants for the tagged type-$i$ job and tagged type-$j$ job respectively are given by
    \begin{align}
        c_{W^{(i)}}(b) &= \sum_{\substack{v \in [d]^{2M+2U+k+1} \\ v_0 = i}} \frac{p(v)}{p_i} \frac{c_Z}{\tS{}^{M+U+k}} \wtilde{D_{v,0}}(-\dr) \cdot \tS{v_k}^{1[-b(\crop{0} v) > -k-b(\crop{k} v)]},\\
        c_{W^{(j)}}(b) &= \sum_{\substack{v \in [d]^{2M+2U+k+1} \\ v_k = j}} \frac{p(v)}{p_j} \frac{c_Z}{\tS{}^{M+U}} \wtilde{D_{v,k}}(-\dr) \cdot \tS{v_0}^{1[-k-b(\crop{k} v) \geq -b(\crop{0} v)]},
    \end{align}
where $\wtilde{D_{v,0}}(s)$ and $\wtilde{D_{v,k}}(s)$ are the Laplace transform of $D_{v,0}$ and $D_{v,k}$, respectively. 
The reasoning is nearly identical as the one used to prove Theorem \ref{th:boostcontext}, the main change is that we sum over strings that are longer than before. The extra factor at the end of each formula is to take into account the other tagged job, which is not included in $D_{v,0}$ or $D_{v,k}$. This formula makes use of the final value theorem in the same manner as before.

An important observation is as follows: if the only change made to the policy is that the tagged type-$i$ job now passes the tagged type-$j$ job, then this must mean that both before and after the change, the two jobs were served consecutively, with only their order switching around. If a third job were served after the type-$j$ job but before the type-$i$ job before the change, this third job would either (a) also be passed by the type-$i$ job after the change (but not before), or (b) would pass the type-$j$ job after the change (but not before), in each case this would be more than the one change we allow to the policy. This lets us state that for any string $v$ where the two tagged jobs have contexts $s$ and $s'$ respectively, $D_{v,0} = D_{v,k}$ both before and after the change, which we simply denote as $D_v$.

We now define the set of all such strings $v$ given that the type-$i$ job with arrival counter $0$ has context $s$ and the type-$j$ job with arrival counter $-k$ has context $s'$.
For $s, s' \in {[d]}^{2M+1}$ and $k \geq 1$, define 
\begin{align*}
    V(s, s', k) &= \left\{ v \in {[d]}^{2M+2U+1+k} \hspace{2mm} | \hspace{2mm} \crop{0} v = s, \hspace{1mm} \crop{k} v = s' \right\},
\end{align*}
as the set of strings of length $2M+2U+1+k$ with $s$ and $s'$ in $v$, centered around index $0$ and $k$ respectively.

We can now look at how the change to the policy affects $c_{W^{(i)}}(b)$ and $c_{W^{(j)}}(b)$. We add a $\Delta$ to the notation whenever we refer to the change in some variable, and start out by only considering the terms of the summation 
for $c_{W^{(i)}}(b)$ and $c_{W^{(j)}}(b)$ that are actually affected by the modification to the policy. For strings $v \notin V(s, s', k)$, the amount of work the tagged jobs wait for does not change, while for strings in $V(s, s', k)$, we use the aforementioned $D_v$ notation as the two tagged jobs wait for the same amount of work in $v$. The only difference is that the type-$i$ job is now served before the type-$j$ job, which explains the
two factors at the end of the following expressions:

\begin{align}
    \Delta c_{W^{(i)}} &= \sum_{v \in V(s, s', k)} \frac{p(v)}{p_{i}} \frac{c_Z}{\tS{}^{M+U+k}} \wtilde{D_v}(-\dr) (1 - \tS{j}), \\
    \Delta c_{W^{(j)}} &= \sum_{v \in V(s, s', k)} \frac{p(v)}{p_{j}} \frac{c_Z}{\tS{}^{M+U}} \wtilde{D_v}(-\dr) (\tS{i} - 1).
\end{align}

If $V(s, s', k)$ is empty, then the policy stays the same and therefore $c_{CCB}(b)$ remains unchanged.
Otherwise, this change gives a reduction in $c_{CCB}(b)$ if and only if
\begin{align*}
    p_{i} \tS{i} \Delta c_{W^{(i)}} + p_{j} \tS{j} \Delta c_{W^{(j)}} < 0,
\end{align*}
meaning
\begin{align*}  
    \left( \sum_{v \in V(s, s', k)} p(v) \frac{c_Z \wtilde{D_v}(-\dr)}{\tS{}^{M+U}} \right) \left(\frac{\tS{i}}{\tS{}^k} (1 - \tS{j}) + \tS{j} (\tS{i} - 1) \right) < 0.
\end{align*}

The first factor containing the summation only includes positive terms, therefore we get a reduced prefactor $c_{CCB}(b)$ if and only if
\begin{align*}
    \frac{\tS{i}}{\tS{}^{k}} (1 - \tS{j}) + \tS{j} (\tS{i} - 1) &< 0,
\end{align*}
which is equivalent to
    \begin{align*}
    \tS{}^{k} &< \frac{\tS{i}}{\tS{i} - 1} \frac{\tS{j} - 1}{\tS{j}}.
\end{align*}   
This condition holds if and only if $k \leq N^*(i,j)$.
\end{proof}

\section{Optimality of $\gamma$-CounterBoost}\label{sec:opt}

In this section, we prove that $\gamma$-CounterBoost is tail optimal within $\cup_M \mathcal{G}_{M}$, that is, the class of all CCB policies. 
As the tail behavior is only affected by boosts for strings of length $2M+1$, we can ignore the boosts for strings with a length below $2M+1$.
For this proof we rely on the Single Swap Theorem, proven in Section \ref{sec:swaptheorem}.
The main challenge in using the Single Swap Theorem 
exists in showing that starting from any
CCB policy, we can modify the boosts in several steps such that (a) the CCB policy becomes identical to $\gamma$-CounterBoost, (b)
the prefactor $c_{CCB}(b)$ does not increases during each step and (c) the modification made to the CCB policy during each step is as stated in the requirements of the Single Swap Theorem. To achieve this last requirement, we make use of the fact that we can 
always slightly modify the boost values for some strings $s$ without actually changing the order in which jobs are scheduled.
To give a very simple example. Suppose we have $2$ job types and the boost $b(s)$ only depends on the type $i$ of a job and not its
context. Let $b_1$ be the boost of a type-$1$ job and $b_2$ the boost of a type-$2$ job. In this case we can freely
modify $b_1$ and $b_2$ as long as $N_b(1, 2) = (\lfloor b_1 - b_2 \rfloor)^+$ and $N_b(2, 1) = (\lfloor b_2 - b_1 \rfloor)^+$ remain the same, without modifying the order in which jobs are scheduled.

\begin{lem} \label{lem:ratboosts}
    Consider a CCB policy in $\mathcal{G}_{M}$ with boost function $b$. We can replace the boost function by $\bar{b}$ so that the policy schedules jobs in the same way, while all boosts are rational numbers.
\end{lem}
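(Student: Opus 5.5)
The plan is to show that the schedule produced by a CCB policy depends on $b$ only through finitely many comparisons of the form ``$b(s)-b(s')\geq k$'' with $k\in\mathbb{Z}$. I will then find a rational point that satisfies exactly the same comparisons, using the fact that all of them have integer coefficients.

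\textbf{Step 1 (the schedule is determined by integer comparisons).} Let $\mathcal{S}=\bigcup_{\ell=0}^M [d]^{M+1+\ell}$ be the finite domain of $b$. Consider two waiting jobs, one with current context $s$ and arrival counter $0$, the other with current context $s'$ and arrival counter $-k$, where $k\in\mathbb{Z}$. The job with context $s$ is served first iff $-b(s)<-k-b(s')$, or the boosted counters are equal and $b(s)>b(s')$. In the tie case we have $b(s)-b(s')=k$, and a larger boost wins.

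For $k\geq 1$ a tie gives $b(s)>b(s')$, so $s$ wins. For $k\leq -1$ a tie gives $b(s)<b(s')$, so $s'$ wins. For $k=0$ the two jobs are the same arrival and need not be compared. Hence:
\begin{itemize}
\item for $k\geq 1$, the job with context $s$ precedes iff $b(s)-b(s')\geq k$;
\item for $k\leq -1$, it precedes iff $b(s)-b(s')>k$, i.e.\ iff $b(s')-b(s)<|k|$.
\end{itemize}
Since the server always selects by these pairwise comparisons, the whole schedule on any sample path is determined by the truth values of the statements $b(s)-b(s')\geq k$, over all $s,s'\in\mathcal{S}$ and integers $k\geq 1$.

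For $k\geq K:=\lfloor\max b-\min b\rfloor+1$ these statements are all false. Any $\bar b$ that also makes the statements for $k=K$ false makes them false for all larger $k$ as well. So only the finitely many $k\in\{1,\ldots,K\}$ matter.

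\textbf{Step 2 (a rational system with a known solution).} Let $P\subset\mathbb{R}^{\mathcal{S}}$ be the set of vectors $x$ such that, for every pair $(s,s')$ and every $k\in\{1,\ldots,K\}$:
\begin{itemize}
\item $x_s-x_{s'}\geq k$ whenever $b(s)-b(s')\geq k$;
\item $x_s-x_{s'}<k$ otherwise.
\end{itemize}
By Step 1, every $\bar b\in P$ defines the same policy as $b$, and $b\in P$. The task is therefore to find a rational point in $P$.

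Split the constraints at the point $b$ into two groups.
\begin{itemize}
\item The set $E$ of tight ones, namely $b(s)-b(s')=k$.
\item All the rest. These hold strictly at $b$, i.e.\ either $b(s)-b(s')>k$ or $b(s)-b(s')<k$.
\end{itemize}
Let $A=\{x: x_s-x_{s'}=k \text{ for all } (s,s',k)\in E\}$. This is a nonempty affine subspace (it contains $b$), cut out by linear equations with integer coefficients and integer right-hand sides.

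\textbf{Step 3 (density of rational points).} Gaussian elimination over $\mathbb{Q}$ gives a rational particular solution $x^0\in A$ and a rational basis $w_1,\ldots,w_r$ of the direction space of $A$. The same basis also spans that direction space over $\mathbb{R}$, because the rank of a rational matrix is the same over both fields. Hence $A\cap\mathbb{Q}^{\mathcal{S}}=x^0+\mathrm{span}_{\mathbb{Q}}\{w_i\}$ is dense in $A$.

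The strict constraints are open conditions, so some neighborhood of $b$ within $A$ satisfies all of them. Every point of $A$ also satisfies the tight constraints, each with equality, and equality is consistent with ``$\geq k$''. Choosing a rational $\bar b\in A$ in that neighborhood therefore gives $\bar b\in P\cap\mathbb{Q}^{\mathcal{S}}$, which proves the lemma.

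\textbf{Main obstacle.} I expect the delicate step to be Step 1. There one must handle the tie-breaking rule correctly, so that the non-strict inequalities are exactly the ones that can be tight, and argue that only finitely many values of $k$ matter. The non-strict inequalities are precisely why a naive perturbation of $b$ is not enough, and why the tight equalities are instead kept exactly by working inside the rational affine subspace $A$. Once Step 1 is in place, Steps 2 and 3 are standard linear algebra.
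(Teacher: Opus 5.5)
Your proof is correct, but it takes a different route from the paper's.

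\textbf{The paper's argument.} The paper simply takes $\bar b(s)$ to be $b(s)$ rounded to $n\ge 1$ decimal digits. Rounding to $n$ digits commutes with integer shifts, so every difference $b(s)-b(s')$ that is an integer is preserved exactly. Every non-integer difference moves by at most $10^{-n}$. Choosing $10^{-n}<1-\max_{s,s'}\{b(s)-b(s')\}$ keeps each non-integer difference in the same integer interval. This one bound covers both the distance to the integer above and the one below, because taken over both orderings of a pair it uses that the fractional parts of $x$ and $-x$ sum to one. Hence all floors, and therefore all $N_b(s,s')$, are unchanged.

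\textbf{Your argument.} You instead isolate the tight constraints $b(s)-b(s')=k$ as an affine subspace $A$ defined over $\mathbb{Q}$. You then combine density of rational points in $A$ with openness of the strict constraints.

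\textbf{Comparison.} The paper's version is shorter and fully explicit, with denominators $10^n$. It exploits the special structure that the only tight constraints say certain differences are integers, and rounding automatically respects those. Your version is a general principle: it works for any finite system of rational-coefficient linear inequalities, strict or non-strict, and needs no such structure. Your Step 1 also makes explicit something the paper only asserts ("all floors remaining the same implies \dots\ the two policies schedule jobs identically"). Namely, because of the tie-breaking rule, the schedule depends only on the truth values of $b(s)-b(s')\ge k$ for $k\ge 1$, that is, only on the $N_b$ values.
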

\begin{proof}
    We define the new boost function as $\bar{b}(s) = round(b(s), n)$, rounding $b(s)$ to $n \geq 1$ decimal digits.
    Let $\{x\}$ be the fractional part of $x \in \mathbb{R}$. For all $s, s'$ where $\{ b(s) - b(s') \} = 0$, this remains the case after rounding for any $n$ and so $\lfloor b(s) - b(s') \rfloor = \lfloor \bar{b}(s) - \bar{b}(s') \rfloor$. Otherwise if $\{ b(s) - b(s') \} \not= 0$, note that $| (\bar{b}(s) - \bar{b}(s')) - (b(s) - b(s')) | \leq 10^{-n}$. If we pick $n$ so that $10^{-n} < 1 - \max_{s, s'}\{ b(s) - b(s') \}$, then for all $s, s'$, $\lfloor b(s) - b(s') \rfloor =  \lfloor \bar{b}(s) - \bar{b}(s') \rfloor$ holds. All floors remaining the same implies $N_b(s, s') = N_{\bar{b}}(s, s')$ for all $s, s'$, meaning the two policies schedule jobs identically.
\end{proof}

We illustrate this on the following example which we use throughout this section. Consider a policy with $d = 4$ types and $M = 0$, i.e. no context. For this example we denote the boost as $b_i$ instead of $b(s)$. Assume the boosts $b$ start out as follows:
\begin{align}
    b_1 &= 2.418 \ldots & b_2 &= 1.950 \ldots & b_3 &= 3.721 \ldots & b_4 &= 4.234 \ldots \nonumber
\end{align}
We pick $n = 2$ which satisfies $10^{-n} < 1 - \max_{s, s'} \{ b(s) - b(s') \}$ and changes the boosts to
\begin{align}
    b_1 &= 2.42 & b_2 &= 1.95 & b_3 &= 3.72 & b_4 &= 4.23 \nonumber
\end{align}
This results in the same policy, now using only rational boosts.

\begin{lem} \label{lem:irratboosts}
    Consider two CCB policies in $\mathcal{G}_{M}$ with boost functions $b$ and $b'$. Then there exist boost functions $\hat{b}$ and $\hat{b'}$ such that
    \begin{enumerate}
        \item $N_{b}(s, s') = N_{\hat{b}}(s, s')$ and $N_{b'}(s, s') = N_{\hat{b'}}(s, s')$ for all $s, s'$ of length $2M+1$, meaning replacing $b$ by $\hat{b}$ or $b'$ by $\hat{b'}$ does not change the policies' prefactors, and
        \item all $d^{2M+1}$ values of both $\hat{b}(s)$ and $\hat{b'}(s)$ for strings of length $2M+1$ are linearly independent over the field of rational numbers, meaning for rational $r_0$, $r(s), r'(s)$, the only solution to $\sum_s r(s) \hat{b}(s) + 
        \sum_s r'(s) \hat{b'}(s) = r_0$ is when $r_0 = r(s) = r'(s) = 0$ for all $s$.
        \item $\hat{b}(s) < \hat{b'}(s)$ for all strings $s$ of length $2M+1$.
    \end{enumerate}
\end{lem}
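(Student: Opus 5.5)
We will build $\hat b$ and $\hat{b'}$ by first making the boosts rational and then adding tiny, rationally independent irrational perturbations. These perturbations are small enough that no floor value changes, and after a common shift they also order the two functions as required.

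\textbf{Step 1 (rationalize).} Apply Lemma~\ref{lem:ratboosts} to $b$ and to $b'$ separately. This gives rational boost functions $\bar b$ and $\bar{b'}$ with $N_{\bar b}=N_b$ and $N_{\bar{b'}}=N_{b'}$. The argument only uses that each floor $\lfloor b(s)-b(s')\rfloor$ is unchanged.

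Next choose a common denominator $q$ for all $2d^{2M+1}$ rational values. Every difference $\bar b(s)-\bar b(s')$ is then an integer multiple of $1/q$. Its fractional part is therefore either $0$ or at most $1-1/q$. Hence any perturbation whose differences lie in $[0,1/q)$ when the fractional part is zero, and in $(-1/q,1/q)$ otherwise, keeps every floor unchanged.

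The zero-fractional-part case is the delicate one. There a negative perturbation of the difference would drop the floor by one. To handle it, note that the tie-break favours the job with the larger boost, so $\lfloor x\rfloor$ with $x$ an integer means the larger-boost job passes. I will check that this is consistent: for $x=b(s)-b(s')$ an integer $n\ge 1$ the job passes exactly $n$ positions back. A perturbation $x\mapsto x-\epsilon$ would change $N$ from $n$ to $n-1$, so it must be avoided.

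\textbf{Step 2 (perturb while preserving integer differences).} To avoid that problem, first shift $\bar b$ by an integer-free scheme. Partition the strings $s$ into classes according to $\bar b(s)\bmod 1$. Within a class all differences are integers, so I will perturb all members of a class by the \emph{same} amount, namely $\epsilon\,\alpha_c$. Different classes differ by a noninteger, so their floors are robust to perturbations smaller than $1/q$ in absolute value.

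Doing this alone, however, destroys linear independence, because values in the same class then still differ by rationals. So I instead use perturbations of the form $\epsilon\,\alpha_s$ with $\alpha_s$ distinct, rationally independent positive irrationals, all much smaller than $1/q$ in total. Within a class I order them consistently with the integer offsets, so that the sign of each integer difference is never reversed downward. Concretely, if $\bar b(s)-\bar b(s')=n\in\mathbb Z$ with $n\ge 0$, I require $\alpha_s\ge\alpha_{s'}$. This is a total order on each class by $\bar b$ value with ties broken arbitrarily; ties with $n=0$ need $\alpha_s-\alpha_{s'}$ of either sign to leave $N=0$ both ways, which holds since $(\lfloor\pm\delta\rfloor)^+=0$ for small $\delta$. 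With $n\ge 1$ and $\alpha_s>\alpha_{s'}$, the floor stays at $n$. For the reverse difference $-n$, we have $\lfloor -n-\delta\rfloor^+=0$, which matches the original value.

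For $\hat{b'}$ I use fresh independent irrationals $\beta_s$ arranged the same way. I take all $\alpha_s,\beta_s$ jointly linearly independent over $\mathbb Q$ together with $1$, for instance $\sqrt{p}$ for distinct primes $p$, scaled by a rational $\epsilon$. Then a relation $\sum r(s)\hat b(s)+\sum r'(s)\hat{b'}(s)=r_0$ reduces to a rational relation among $1,\alpha_s,\beta_s$, which forces all coefficients to vanish. This gives item 2.

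\textbf{Step 3 (ordering).} Adding a constant to all values of a boost function changes no $N$-values. I add a large integer $L$ to $\hat{b'}$ so that $\min_s\hat{b'}(s)>\max_s\hat b(s)$, which gives item 3. Independence is kept because an integer shift only alters $r_0$.

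The main obstacle is Step 2: the perturbation must be done so that the floors at exactly-integer differences are not lowered, which requires the monotone choice of perturbations within each residue class described above.
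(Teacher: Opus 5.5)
Your proposal is correct and follows essentially the same route as the paper: rationalize via Lemma~\ref{lem:ratboosts}, add small rational multiples of square roots of distinct primes ordered consistently with the $\bar b$ values (so integer differences are never pushed down), use a fresh set of primes for $\hat{b'}$, and finally shift $\hat{b'}$ by a constant. The differences are cosmetic: you impose the monotone ordering only within residue classes of $\bar b \bmod 1$ and bound the perturbations by $1/q$ from a common denominator, whereas the paper orders all strings globally and uses $\epsilon_{max}=1-\max_{\tilde s,\tilde s'}\{\bar b(\tilde s)-\bar b(\tilde s')\}$.
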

\begin{proof}
    We start by applying Lemma \ref{lem:ratboosts} to both $b$ and $b'$ to ensure both policies only have rational boosts, with boost functions $\bar{b}$ and $\bar{b'}$ respectively. Let $q_1, \ldots, q_{2d^{2M+1}}$ be the first $2d^{2M+1}$ primes, such that their square roots are independent over the rational numbers. Define $\epsilon_i = (q_i)^{1/2}/\lceil (q_{2d^{2M+1}})^{1/2} \rceil < 1$. 

    The idea is to find one sufficiently small number $\epsilon(s)$ per string $s$ where $\bar{b}(s) > \bar{b}(s')$ implies $\epsilon(s) > \epsilon(s')$. The new boost function $\hat{b}(s) = \bar{b}(s) + \epsilon(s)$ should then result in the same floor values (and thus $N_{\bar{b}}(s, s') = N_{\hat{b}}(s, s')$).

    To make sure that $\lfloor \bar{b}(s) - \bar{b}(s') \rfloor = \lfloor 
    \hat{b}(s) - \hat{b}(s') \rfloor $ when $\bar{b}(s) > \bar{b}(s')$, 
    all epsilons need to be less than $\epsilon_{max} = 1-\max_{\tilde s,\tilde s'} \{ \bar{b}(\tilde s) - \bar{b}(\tilde s') \}$. In this case
    we have
    \begin{align}
        \lfloor \hat{b}(s)-\hat{b}(s') \rfloor = \lfloor \bar{b}(s)-\bar{b}(s') + (\epsilon(s) - \epsilon(s')) \rfloor \geq \lfloor \bar{b}(s)-\bar{b}(s') \rfloor
    \end{align}
    as $\epsilon(s) > \epsilon(s')$ and
    \begin{align}
        \lfloor \hat{b}(s)-\hat{b}(s') \rfloor = \lfloor \bar{b}(s)-\bar{b}(s') + (\epsilon(s) - \epsilon(s')) \rfloor \leq \lfloor \bar{b}(s)-\bar{b}(s') \rfloor
    \end{align}
    as $\epsilon(s) - \epsilon(s') < \epsilon_{max}$, when $\bar{b}(s) > \bar{b}(s')$. 
    
    If $\bar{b}(s) = \bar{b}(s')$, we also assign different epsilons to both. Assume $\epsilon(s) > \epsilon(s')$ without loss of generality.
    In this case $\lfloor \bar{b}(s) - \bar{b}(s') \rfloor = 0 = \lfloor \epsilon(s) - \epsilon(s') \rfloor = \lfloor \hat{b}(s) - \hat{b}(s') \rfloor$. The reverse is however not true: $\lfloor \bar{b}(s') - \bar{b}(s) \rfloor = 0$ and $\lfloor \hat{b}(s') - \hat{b}(s) \rfloor = -1$. Fortunately this does not affect the policy as $N_{\hat{b}}(s', s) = 0$. 
    Finally, in case $\bar{b}(s) < \bar{b}(s')$, then $N_{\bar{b}}(s, s') = 0$ and as $\epsilon(s) < \epsilon(s')$ we also have
    $N_{\hat{b}}(s, s') = 0$.
    
    To find the epsilons, we set $\epsilon(s)$ equal to $\epsilon_{max} \epsilon_{d^{2M+1} + 1 - k}$, where $s$ is the string with the $k$-th largest boost $\bar{b}(s)$, with $1 \leq k \leq d^{2M+1}$. In case multiple strings have the same boost, ties may be broken arbitrarily. This fixes the boosts $\hat{b}(s)$ for all $s$.
    
    We now use the same argument to set the $\hat{b'}(s)$ values. Let $\epsilon'_{max}=1-\max_{\tilde s, \tilde s'} \{\bar{b'}(\tilde s)-\bar{b'}(\tilde s')\}$, then $\hat{b'}(s) = \bar{b'}(s) +  \epsilon'_{max} \epsilon_{2d^{2M+1} + 1 - k}$ where $s$ is the string with the $k$-th largest boost $\bar{b'}(s)$, using the second half of the independent irrational numbers.

    Finally, we increase all boosts $\hat{b'}(s)$ by some fixed rational amount $c > \max_s \hat{b}(s) - \min_s \hat{b'}(s)$, which ensures $\hat{b}(s) < \hat{b'}(s)$ for all strings $s$ as required. Adding a constant to all boosts clearly does not change the policy.
\end{proof}

When using this property in our proof of the next theorem, the CCB policy corresponding to $b'$ (and $\hat b'$) is the
$\gamma$-CounterBoost policy. We pick one more independent irrational number $\epsilon > 0$, for example $\epsilon = (q_{2d^{2M+1} + 1})^{1/2}$. In the following proof, boosts will change throughout the procedure. We would like to represent each boost as a vector of rational coefficients with $2 + 2d^{2M+1}$ dimensions, i.e. a vector in $\mathbb{Q}^{2 + 2d^{2M+1}}$. The vector representing the "bases" is given as 
$$\vec{b} = [1, \epsilon, b(11 \ldots 11), b(11 \ldots 12), \ldots, b(dd \ldots dd), b^*(11 \ldots 11), \ldots, b^*(dd \ldots dd)].$$
The first dimension is for adding rational offsets, the second dimension is to add arbitrarily small epsilons during the procedure, while the remaining dimensions contain the \textbf{initial} boost values, which are all linearly independent as well. We call the dimensions the rational dimension, the $\epsilon$ dimension, and the remaining ones the $s$ or $s^*$ dimension, for strings $s$ of length $2M+1$.
Let $\mathcal{H} = \{ \vec{v} \vec{b} \hspace{2mm} | \hspace{2mm} \vec{v} \in \mathbb{Q}^{2 + 2d^{2M+1}} \}$ be the set of numbers in this vector space which is closed under addition and subtraction, and $\mathbb{Q} \subset \mathcal{H} \subset \mathbb{R}$.
Two numbers $x, y \in \mathcal{H}$ are the same if and only if their vector representation is identical. 
Further, their difference is an integer if and only if the vector representation of their difference contains zeroes in all dimensions, except in the rational dimension where it contains the integer. We will use this to argue during the proof that certain numbers cannot be the same, as their vector representation must be different.

For our running example, suppose that after applying Lemma \ref{lem:ratboosts}, the target boosts $b^*$ are given by
\begin{align}
    b^*_1 &= 3.18 & b^*_2 &= 2.06 & b^*_3 &= 5.75 & b^*_4 &= 3.79 \nonumber
\end{align}
while the starting boosts $b$ are as before. We apply Lemma \ref{lem:irratboosts}: the first eight primes are 2, 3, 5, 7, 11, 13, 17, 19, so that $\epsilon_i = \sqrt{q_i}/5$. We first change $b$ using the first four epsilons. The bound $\epsilon_{max} = 1-\max_{\tilde s,\tilde s'} \{ b(\tilde s) - b(\tilde s') \} = 0.19 = 19/100$ is multiplied by these epsilons which are then added to $b(s)$, with the largest epsilons added to the largest boosts:
\begin{align}
    b_1 &= 2.42+\frac{19}{500} \sqrt{3} \approx 2.486 & b_2 &= 1.95+\frac{19}{500} \sqrt{2} \approx 2.004 \nonumber \\
    b_3 &= 3.72+\frac{19}{500} \sqrt{5} \approx 3.805 & b_4 &= 4.23+\frac{19}{500} \sqrt{7} \approx 4.331 \nonumber
\end{align}
This is repeated on $b^*$ with a different bound $\epsilon^*_{max} = 1-\max_{\tilde s,\tilde s'} \{ b^*(\tilde s) - b^*(\tilde s') \} = 1/25$ and using the last four epsilons. We also add $c = 10 > \max_s b(s) - \min_s b^*(s)$ to $b^*$ to ensure $b(s) < b^*(s)$.
\begin{align}
    b^*_1 &= 3.18+\frac{1}{125} \sqrt{13} + 10 \approx 13.209 & b^*_2 &= 2.06+\frac{1}{125} \sqrt{11} + 10 \approx 12.087 \nonumber \\
    b^*_3 &= 5.75+\frac{1}{125} \sqrt{19} + 10 \approx 15.785 & b^*_4 &= 3.79+\frac{1}{125} \sqrt{17} + 10 \approx 13.823 \nonumber
\end{align}

The extra epsilon $\epsilon = \sqrt{23}$ is simply the square root of the next prime. This gives a base vector of
\begin{align}
    \vec{b} = [1, \epsilon, b_1, b_2, b_3, b_4, b^*_1, b^*_2, b^*_3, b^*_4]. \nonumber
\end{align}
We now present the proof and then apply the first few steps to our running example.

\begin{theorem}
    $\gamma$-CounterBoost has the lowest prefactor among all the policies in $\cup_M \mathcal{G}_{M}$.
\end{theorem}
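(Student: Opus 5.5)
Fix an arbitrary CCB policy in $\mathcal{G}_M$ with boost function $b$, and view $\gamma$-CounterBoost as a member of $\mathcal{G}_M$ whose boost function ignores the context: $b^*(s)=b^*_{s_{M+1}}$. By Lemma \ref{lem:irratboosts} we replace $b$ and $b^*$ by $\hat b$ and $\hat{b'}$. This leaves both policies and their prefactors unchanged. After the replacement all $2d^{2M+1}$ values are linearly independent over $\mathbb{Q}$, and $\hat b(s)<\hat{b'}(s)$ for every $s$.

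We then interpolate with a one-parameter family. A straight line will not do, because the coefficients of $b(s)$ and $b^*(s)$ would then be tied together and two pairwise differences could become integers at the same time. Instead we move the boosts one string at a time, in small steps that stay inside $\mathcal{H}$. At every moment each current boost $\beta(s)$ is a vector in $\mathbb{Q}^{2+2d^{2M+1}}$, and we keep its $s$ and $s^*$ coordinates distinct from those of every other string: $\beta(s)$ always has a nonzero $s$- or $s^*$-coordinate that no other string uses. For $s\neq s'$, the difference $\beta(s)-\beta(s')$ is an integer only if all non-rational coordinates vanish, which cannot happen. So as we continuously raise one $\beta(s)$ from $\hat b(s)$ towards $\hat{b'}(s)$, each crossing of an integer threshold by $\beta(s)-\beta(s')$ happens at a distinct parameter value, and only one $N$-value changes at a time.

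The homotopy for string $s$ is $\beta_t(s)=(1-t)\hat b(s)+t\hat{b'}(s)$ with $t$ increasing from $0$ to $1$, while all other strings are frozen. Integer crossings occur at finitely many values of $t$. At such a value, $\beta_t(s)-\beta(s')=n$ for some $s'$ and integer $n$, and this equation determines $t$ uniquely. It also determines $s'$ and $n$ uniquely: if a second pair $(s'',m)$ also satisfied it, subtracting the two equations would give $\beta(s')-\beta(s'')\in\mathbb{Z}$, which is impossible since both are frozen and carry distinct coordinates. Hence every event changes exactly one value $N(s,s')$ by one, or one value $N(s',s)$ by one; the same holds when several strings are processed in sequence. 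Since $\beta(s)$ only increases, $N(s,s')$ can only go up from $k-1$ to $k$, and $N(s',s)$ can only go down. A decrease of $N(s',s)$ from $k$ to $k-1$ is the reverse of an increase, so by the Single Swap Theorem it lowers (or preserves) the prefactor iff $k>N^*(s'_{M+1},s_{M+1})$.

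The main obstacle is to choose the order of moves so that every event is favorable. I propose a potential argument based on the target. Write $\Delta(s,s')=\beta(s)-\beta(s')$ and $\Delta^*(s,s')=\hat{b'}(s)-\hat{b'}(s')$. The quantity $\lfloor\Delta^*(s,s')\rfloor^+$ equals $N^*(i,j)$ with $i=s_{M+1}$ and $j=s'_{M+1}$, because the perturbations from Lemma \ref{lem:irratboosts} preserve the $N$-values of $\gamma$-CounterBoost.

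When $\beta(s)$ rises and $\Delta(s,s')$ crosses $k$, the event is favorable iff $k\le N^*(i,j)$, which holds whenever $\Delta(s,s')$ has not yet passed $\Delta^*(s,s')$. So at each stage I would move the string $s$ with the smallest current deficit $\hat{b'}(s)-\beta(s)$, and raise it until its deficit equals the second smallest (a greedy "water filling").

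Under this rule, while $s$ rises its deficit stays at least that of every frozen $s'$, so $\Delta(s,s')\le\Delta^*(s,s')$ throughout. Every upward crossing $k$ therefore satisfies $k\le\lceil\Delta^*\rceil$. The boundary case $k$ strictly between $\Delta^*$ and $\lfloor\Delta^*\rfloor+1$ cannot occur, because $\Delta^*$ is irrational.

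For a downward change of $N(s',s)$, the rising string $s$ is the one being passed. Then $\Delta(s',s)\ge\Delta^*(s',s)$, so a crossing from $k$ to $k-1$ has $k-1\ge\lfloor\Delta^*(s',s)\rfloor$, that is $k>N^*$, which is favorable.

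When all deficits are equal, we raise them together in rotation by tiny steps; the irrational $\epsilon$ direction is available to break ties and keep events isolated. In the end every deficit is zero, the policy coincides with $\gamma$-CounterBoost, and the prefactor has been non-increasing throughout. Making this tie-handling rigorous within $\mathcal{H}$ is the delicate part. Since $M$ and the starting policy were arbitrary, this proves the theorem.
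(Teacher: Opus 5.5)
Your mechanism is the right one: raise boosts monotonically from $\hat b$ to $\hat{b'}$, keeping every moving string's deficit at least that of every frozen string. Then each isolated crossing has $\Delta(s,s')<\Delta^*(s,s')$, and Theorem \ref{th:swap} (or its reverse) gives a non-increase. As written, though, the schedule has a real gap.

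First, the selection rule is stated backwards. Raising the string with the \emph{smallest} deficit moves it away from the second smallest, and it violates the inequality you invoke next. You need the \emph{largest} deficit, raised until it equals the second largest.

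Second, and more importantly, ties do not arise only ``when all deficits are equal.'' At that final stage nothing happens anyway, since all differences stay fixed. Ties arise right after the first stage: two strings $s_1,s_2$ then share the maximal deficit, and your one-string-at-a-time homotopy has no valid move. Raising $s_1$ alone pushes $\Delta(s_1,s_2)$ above $\Delta^*(s_1,s_2)$. An upward crossing of an integer $k>\Delta^*(s_1,s_2)$ is exactly a swap that the Single Swap Theorem says cannot help. ``Rotation by tiny steps'' would need step sizes below the distance from every $\Delta^*(s,s')$ in the tied group to the nearest integer, and an isolation argument for a moving string against a temporarily frozen tied one. You give neither, and your uniqueness argument (one mover, all others frozen with private coordinates) does not cover this situation.

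The fix is simple and also makes the $\epsilon$ direction unnecessary: move the whole tied group in lockstep. Concretely, use $\beta_D(s)=\max\bigl(\hat b(s),\,\hat{b'}(s)-D\bigr)$ with $D$ decreasing from $\max_s(\hat{b'}(s)-\hat b(s))$ to $0$.
\begin{itemize}
\item Differences within the moving group stay equal to $\Delta^*$, and differences among frozen strings are constant. Joining times do not change the policy.
\item Every event has the form $\hat{b'}(s)-D-\hat b(s')=n$ with $s$ moving and $s'$ frozen. Automatically $\Delta(s,s')=\Delta^*(s,s')-\bigl(D-(\hat{b'}(s')-\hat b(s'))\bigr)<\Delta^*(s,s')$.
\item Two distinct events at the same $D$ would give $\hat{b'}(s_1)-\hat{b'}(s_2)-\hat b(s_1')+\hat b(s_2')\in\mathbb{Z}$ with some nonzero coefficient. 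This contradicts Lemma \ref{lem:irratboosts}, so exactly one $N$-value changes by one per event.
\item For upward crossings, $k=\Delta<\Delta^*$ gives $k\le\lfloor\Delta^*\rfloor=N^*$ directly; your $\lceil\Delta^*\rceil$ remark is unnecessary. Downward changes of $N(s',s)$ are handled symmetrically.
\end{itemize}

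With this repair your argument is a valid mirror image of the paper's. The paper raises all unfinished strings ($W^+$) together and freezes each one on reaching its target. Frozen deficits are then zero, so the key inequality is trivial. It guarantees isolated events by tracking vector representations in $\mathcal{H}$ through the $\epsilon$ dimension and invariants (i1)--(i5). Your top-down version freezes strings at their initial values, so frozen boosts are literally basis elements and isolation follows in one line from Lemma \ref{lem:irratboosts}. This buys cleaner bookkeeping but no extra generality: both routes rest on the same inequality and the Single Swap Theorem.
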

\begin{proof}
    The proof exists in coming up with a sequence of Contextual CounterBoost policies such that starting from an arbitrary policy, 
    each policy either reduces the prefactor of the previous policy in the sequence or keeps it the same, and the final policy schedules jobs for any context $s$ of length $2M+1$ identical to $\gamma$-CounterBoost by setting 
    $b(s) = b^*(s)= b_{s_{M+1}}^*$. 

    We start with an arbitrary CCB policy with boost function $b$ which we will improve by simultaneously increasing all the boosts that are below $b^*(s)$.
    We first apply Lemma \ref{lem:irratboosts} to change both $b$ and $b^*$ to have independent irrational values without affecting the policy's behavior. Additionally, this ensures $b(s) < b^*(s)$ for all strings $s$. 

    Let $W = {[d]}^{2M+1}$ denote the set of all strings $s$ of length $2M+1$. We partition $W$ into two sets
    \begin{align}
        W^+ = \{ s \in W \hspace{2mm} | \hspace{2mm} b(s) < b^*(s) \} \\
        W^0 = \{ s \in W \hspace{2mm} | \hspace{2mm} b(s) = b^*(s) \}
    \end{align}
    consisting respectively of the strings $s$ for which we are increasing $b(s)$, and the strings for which the boost has reached its target and remains the same.
    For strings in $W^+$, we will increase the boosts by a fixed amount repeatedly until the set becomes empty. This set initially contains all strings, while $W^0$ starts as an empty set.
    
    Define the following two multisets and values:
    \begin{align}
        D_a &= \{ b^*(s) - b(s) \hspace{2mm} | \hspace{2mm} s \in W^+ \} & I_a &= \min D_a \\
        D_c &= \{ \{ b(s) - b(s') \} \hspace{2mm} | \hspace{2mm} s \in W^0, s' \in W^+ \} & I_c &= \min D_c
    \end{align}

    $I_a$ is the distance until a string $s$ sees its boost reach the target value $b^*(s)$, while $I_c$ is the distance before a first pair of floors changes, possibly affecting the policy. For example, suppose $b(s)=4.2$ and $b(s')=3$, then  $0.2
    \in D_c$. Suppose that $I_c = 0.2$, $s' \in W^+$ and $s \in W^0$. Therefore $b(s')$ increases while $b(s)$ remains the same
    during our procedure. When $b(s')$ 
    increases by $0.2$ to $3.2$, $\lfloor b(s) - b(s') \rfloor$ remains the same, but its reverse $\lfloor b(s') - b(s) \rfloor$ changes from $-2$ to $-1$. Increasing $b(s')$ by $\delta$ (which can be chosen arbitrarily small), the $\lfloor b(s) - b(s') \rfloor$ changes from $1$ to $0$.

    We now indicate how the policy changes one step at a time. We refer to several invariant properties labeled (i1) to (i5) which are presented after the steps.
    
    Depending on whether $I_a$ or $I_c$ is smaller, one of the following two cases may occur. Note that $I_a$ cannot be equal to $I_c$ because of invariant (i5).
    \begin{enumerate}
        \item If $I_a < I_c$, then we increase the boosts of all strings in $W^+$ by $I_a$. This causes no $N_b(s, s')$ to change, while for exactly one string in $W^+$, the target boost $b^*(s)$ is reached. This string is then moved from $W^+$ to $W^0$. Because of invariant (i3), it can not be the case that multiple strings reach their target boost at the same time.
        
        \item If $I_c < I_a$, increasing the boosts of all strings in $W^+$ by $I_c$ increases exactly one $\lfloor b(s') - b(s) \rfloor$ from $k-1$ to $k$, for some $k$, with $s \in W^0$ and $s' \in W^+$. 
        The reason only one floor changes is because $D_c$ does not contain any duplicates (invariant (i4)) meaning it has a unique minimum. We then add an extra increment of $q \epsilon$ to all the boosts in $W^+$ which decreases only $\lfloor b(s) - b(s') \rfloor$ from $-k$ to $-(k+1)$, when $q \in \mathbb{Q}^+$ is chosen small enough: $0 < q \epsilon < 1-\max_{\tilde s, \tilde s'} \{ b(\tilde s) - b(\tilde s') \}$, and $0 < q \epsilon < \min_{\tilde s \in W^+} b^*(\tilde s) - b(\tilde s)$ (using the updated boost function which already has $I_c$ added to $W^+$). These conditions are required so that a) the policy does not change because $q \epsilon$ was too large and b) no boost should jump past its target boost.
        
        We split this into three subcases depending on the value of $k$:

        \begin{enumerate}
            \item If $k \geq 1$, then the increase of $\lfloor b(s') - b(s) \rfloor$ changed the policy. We can use Theorem \ref{th:swap} to see that the new policy is either an improvement, or makes no change. As $b(s') \leq b^*(s')$ and $b(s) = b^*(s)$, this implies $k = \lfloor b(s') - b(s) \rfloor \leq \lfloor b^*(s') - b^*(s) \rfloor = N^*(s'_{M+1}, s_{M+1})$ meaning the change is an improvement. The decrease of $\lfloor b(s) - b(s') \rfloor$ from $-k$ to $-k-1$ does not change the policy as $N_b(s, s') = 0$ both before and after changing the boosts.
            \item If $k \leq -1$, the situation reverses and the increase of $\lfloor b(s') - b(s) \rfloor$ does not change the policy, while the decrease of $\lfloor b(s) - b(s') \rfloor$ does. We can again use Theorem \ref{th:swap} to see that undoing this change (that is, increasing the floor back from $-(k+1)$ to $-k$) would make the policy worse, meaning the change improved the policy.
            \item If $k = 0$, then neither floor affects the policy as $N_b(s, s') = N_b(s', s) = 0$ both before and after changing the boosts. This happens when one boost overtakes the other.
        \end{enumerate}
    \end{enumerate}

    Throughout the procedure, there are the following invariants that hold:
    \begin{enumerate}
        \item[(i1)] For each string $s^+ \in W^+$, the $s^+ {}^*$ dimension of $b(s)$ for any $s$ is zero. When the string $s^+$ reaches its target boost $b^*(s^+)$, all strings in $W^+$ now have a one in the $s^+ {}^*$ dimension. This remains at one for $s^+$ as its boost no longer changes. For other strings $s$, the $s^+ {}^*$ dimension may go back and forth between zero and one several times until it eventually stays at zero when these strings reach their own target boost. In particular, the dimension changes to one (or remains one) when a floor that includes $b(s^+)$ changes, while other changes set it to zero again.
        
        \item[(i2)] Initially, the $\epsilon$ dimension is zero in $b(s)$ for each string $s$. Following a step where $I_c < I_a$, this dimension is changed to one rational number $q > 0$ for strings in $W^+$, while it remains zero for strings in $W^0$. In the next step, both $I_a$ and $I_c$ will have $-q$ in this dimension. This means the next step is guaranteed to clear this dimension, and in case $I_c < I_a$, a new (possibly different) $q$ will be added.
        
        \item[(i3)] $D_a$ contains no duplicates: initially, this is clearly the case, as each difference $b^*(s) - b(s)$ has a different vector representation. Each step in the procedure adds the same increment to all boosts for strings in $W^+$, reducing all distances in $D_a$ by the same amount thus introducing no duplicates. When strings reach their target boost, the set $D_a$ becomes smaller.
        
        \item[(i4)] $D_c$ contains no duplicates: the multiset $\{ b(s) - b(s') \hspace{2mm} | \hspace{2mm} 
        s \neq s' \}$ contains no duplicates, contains no integers, and does not contain any pairs of elements the difference 
        of which is an integer. This holds initially as all $b(s)$ have a one in a unique dimension in their vector representation. Similarly to the previous case, each step in the procedure adds an increment to the boosts of all strings in $W^+$. We split this into two cases, depending on whether $I_a$ or $I_c$ was smaller in the step.
        
        If $I_a < I_c$, then the increment was $I_a = b^*(s^+) - b(s^+)$ for an $s^+ \in W^+$. This leaves some $b(s) - b(s')$ the same, increases some by $I_a$, and decreases others by $I_a$. This cannot introduce duplicates as all boosts had a zero in the $s^+ {}^*$ dimension because of invariant (i1), while the $b(s) - b(s')$ that change either have a 1 or a -1 in this dimension. For the same reason, this also introduces no integers. Finally, the pairwise differences originally also had a zero in the $s^+ {}^*$ dimension which now becomes a number $v \in \{-2, -1, 0, 1, 2 \}$. If $v \neq 0$, this clearly cannot be an integer, and when $v = 0$, this pairwise difference which was not an integer remained the same.
        
        If $I_c < I_a$, then the increment was $\{ b(s) - b(s') \} + q \epsilon$ for some $s \in W^0, s' \in W^+, q \in \mathbb{Q}^+$. We split this increment into two parts: by first adding $\{ b(s) - b(s') \}$, we clear the $\epsilon$ dimension as explained in invariant (i2). 
        By then adding $q \epsilon$, we can reuse the same argument as before as the $\epsilon$ dimension is now a "new" dimension, just like the $s^+ {}^*$ dimension in the previous case. Note that if $q = 0$, then an integer would be introduced!
        
        This property implies that $D_c$ does not contain duplicates, as $\{b(s) - b(s')\} = \{b(\bar{s}) - b(\bar{s'})\}$ would imply $(b(s) - b(\bar{s})) + (b(\bar{s'}) - b(s'))$ is an integer.
        
        \item[(i5)] $I_a \neq I_c$, meaning no floors change at the exact moment when a boost $b(s)$ reaches $b^*(s)$. Let $s^+$ be the unique string in $W^+$ such that $I_a = b^*(s^+) - b(s^+)$. As $s^+$ has not reached its target boost yet, vectors $b(s)$ for any $s$ have a zero in the $s^+ {}^*$ dimension due to invariant (i1), while $I_a$ does not have a zero in this dimension.
    \end{enumerate}

    At the end of this procedure, $W^+$ will be empty and all boosts will have reached their target boost. As for all strings $s$
    of length $2M+1$ we have $b(s) = b^*(s)$, the policy has the same prefactor as $\gamma$-CounterBoost. 
    This shows that this policy is tail optimal within $\cup_M \mathcal{G}_{M}$, finishing the proof.
\end{proof}

We now illustrate the first few steps of the procedure on our running example.
The initial vectors representing $b(s)$ and $b^*(s)$ are as follows, where the columns are labeled to indicate the 10 dimensions are those of $\vec{b}$:

\begin{center}
    \begin{tabular}{|cc|cccc|cccc||l|}
        \hline $1$ & $\epsilon$ & $b_1$ & $b_2$ & $b_3$ & $b_4$ & $b^*_1$ & $b^*_2$ & $b^*_3$ & $b^*_4$ & \\ \hline   
        \hline $0$ & $0$ & $1$ & $0$ & $0$ & $0$ & $0$ & $0$ & $0$ & $0$ & $b_1 \approx 2.486$ \\
        \cline{1-10} $0$ & $0$ & $0$ & $1$ & $0$ & $0$ & $0$ & $0$ & $0$ & $0$ & $b_2 \approx 2.004$ \\
        \cline{1-10} $0$ & $0$ & $0$ & $0$ & $1$ & $0$ & $0$ & $0$ & $0$ & $0$ & $b_3 \approx 3.805$ \\
        \cline{1-10} $0$ & $0$ & $0$ & $0$ & $0$ & $1$ & $0$ & $0$ & $0$ & $0$ & $b_4 \approx 4.331$ \\
        \cline{1-10} $0$ & $0$ & $0$ & $0$ & $0$ & $0$ & $1$ & $0$ & $0$ & $0$ & $b^*_1 \approx 13.209$ \\
        \cline{1-10} $0$ & $0$ & $0$ & $0$ & $0$ & $0$ & $0$ & $1$ & $0$ & $0$ & $b^*_2 \approx 12.087$ \\
        \cline{1-10} $0$ & $0$ & $0$ & $0$ & $0$ & $0$ & $0$ & $0$ & $1$ & $0$ & $b^*_3 \approx 15.785$ \\
        \cline{1-10} $0$ & $0$ & $0$ & $0$ & $0$ & $0$ & $0$ & $0$ & $0$ & $1$ & $b^*_4 \approx 13.823$ \\
        \hline
    \end{tabular}
\end{center}

At the start of the procedure we have $W^+ = \{1, 2, 3, 4\}$ and $W^0 = \emptyset$. This means all boosts will be increased at the same time, and so no floors can currently change. Consequently, $D_c$ is an empty set, while $D_a$ has the following four values:

\begin{center}
    \begin{tabular}{|cc|cccc|cccc||l|}  
        \hline $0$ & $0$ & $-1$ & $0$ & $0$ & $0$ & $1$ & $0$ & $0$ & $0$ & $b^*_1-b_1 \approx 10.723$ \\
        \cline{1-10} $0$ & $0$ & $0$ & $-1$ & $0$ & $0$ & $0$ & $1$ & $0$ & $0$ & $b^*_2-b_2 \approx 10.083$ \\
        \cline{1-10} $0$ & $0$ & $0$ & $0$ & $-1$ & $0$ & $0$ & $0$ & $1$ & $0$ & $b^*_3-b_3 \approx 11.980$ \\
        \cline{1-10} $0$ & $0$ & $0$ & $0$ & $0$ & $-1$ & $0$ & $0$ & $0$ & $1$ & $b^*_4-b_4 \approx 9.492$ \\
        \hline
    \end{tabular}
\end{center}

The lowest value is $b^*_4 - b_4 \approx 9.492$: when we increase all $b(s)$ by this amount, the policy does not change, while for one out of four strings, the target boost is reached. The set $W^+$ becomes $\{1, 2, 3\}$ while $W^0$ changes to $\{4\}$. The updated boosts, $D_a$, and $D_c$ are shown below. For clarity, we mark the values that changed in bold.

\begin{center}
    \begin{tabular}{|cc|ccc>{\boldmath}c|ccc>{\boldmath}c||l|}
        \hline $0$ & $0$ & $1$ & $0$ & $0$ & $-1$ & $0$ & $0$ & $0$ & $1$ & $b_1 \approx 11.978$ \\
        \cline{1-10} $0$ & $0$ & $0$ & $1$ & $0$ & $-1$ & $0$ & $0$ & $0$ & $1$ & $b_2 \approx 11.496$ \\
        \cline{1-10} $0$ & $0$ & $0$ & $0$ & $1$ & $-1$ & $0$ & $0$ & $0$ & $1$ & $b_3 \approx 13.297$ \\
        \cline{1-10} $0$ & $0$ & $0$ & $0$ & $0$ & $0$ & $0$ & $0$ & $0$ & $1$ & $b_4 \approx 13.823$ \\
        \hline
        \hline $0$ & $0$ & $-1$ & $0$ & $0$ & $1$ & $1$ & $0$ & $0$ & $-1$ & $b^*_1-b_1 \approx 1.231$ \\
        \cline{1-10} $0$ & $0$ & $0$ & $-1$ & $0$ & $1$ & $0$ & $1$ & $0$ & $-1$ & $b^*_2-b_2 \approx 0.590$ \\
        \cline{1-10} $0$ & $0$ & $0$ & $0$ & $-1$ & $1$ & $0$ & $0$ & $1$ & $-1$ & $b^*_3-b_3 \approx 2.487$ \\
        \hline
        \hline $-1$ & $0$ & $-1$ & $0$ & $0$ & $1$ & $0$ & $0$ & $0$ & $0$ & $\{b_4-b_1\} \approx 0.845$ \\
        \cline{1-10} $-2$ & $0$ & $0$ & $-1$ & $0$ & $1$ & $0$ & $0$ & $0$ & $0$ & $\{b_4-b_2\} \approx 0.327$ \\
        \cline{1-10} $0$ & $0$ & $0$ & $0$ & $-1$ & $1$ & $0$ & $0$ & $0$ & $0$ & $\{b_4-b_3\} \approx 0.526$ \\
        \hline
    \end{tabular}
\end{center}

We see that as string $4$ reaches its target boost, the $b^*_4$ column becomes $1$ for $b_4$, but also for all boosts that were increasing (which are the three remaining boosts in this case), as explained in invariant (i1).

The smallest value is given by $\{b_4 - b_2\}$. When $b_1, b_2, b_3$ are increased by this amount, $\lfloor b_2 - b_4 \rfloor$ increases from -3 to -2, and when a small extra epsilon $q \epsilon$ is added, $\lfloor b_4 - b_2 \rfloor$ decreases from 2 to 1. The former does not affect the policy, but the latter does. For this we use Theorem \ref{th:swap} in reverse: as $k = 2 > N^*(4, 2) = 1$, increasing $\lfloor b_4 - b_2 \rfloor$ from 1 to 2 would make the policy worse according to Theorem \ref{th:swap}, which means decreasing it from 2 to 1 (as we just did) improves the policy.
We pick $q = 10^{-3}$ which makes $q \epsilon \approx 0.00480$ small enough. After this step, the updated boosts, $D_a$, and $D_c$ are as follows:

\begin{center}
    \begin{tabular}{|c>{\boldmath}c|c>{\boldmath}cc>{\boldmath}c|cccc||l|}
        \hline $-2$ & $10^{-3}$ & $1$ & $-1$ & $0$ & $0$ & $0$ & $0$ & $0$ & $1$ & $b_1 \approx 12.310$ \\
        \cline{1-10} $-2$ & $10^{-3}$ & $0$ & $0$ & $0$ & $0$ & $0$ & $0$ & $0$ & $1$ & $b_2 \approx 11.828$ \\
        \cline{1-10} $-2$ & $10^{-3}$ & $0$ & $-1$ & $1$ & $0$ & $0$ & $0$ & $0$ & $1$ & $b_3 \approx 13.629$ \\
        \cline{1-10} \unboldmath{$0$} & \unboldmath{$0$} & \unboldmath{$0$} & \unboldmath{$0$} & \unboldmath{$0$} & \unboldmath{$0$} & \unboldmath{$0$} & \unboldmath{$0$} & \unboldmath{$0$} & \unboldmath{$1$} & $b_4 \approx 13.823$ \\
        \hline
        \hline $2$ & $-10^{-3}$ & $-1$ & $1$ & $0$ & $0$ & $1$ & $0$ & $0$ & $-1$ & $b^*_1-b_1 \approx 0.899$ \\
        \cline{1-10} $2$ & $-10^{-3}$ & $0$ & $0$ & $0$ & $0$ & $0$ & $1$ & $0$ & $-1$ & $b^*_2-b_2 \approx 0.259$ \\
        \cline{1-10} $2$ & $-10^{-3}$ & $0$ & $1$ & $-1$ & $0$ & $0$ & $0$ & $1$ & $-1$ & $b^*_3-b_3 \approx 2.156$ \\
        \hline
        \hline $1$ & $-10^{-3}$ & $-1$ & $1$ & $0$ & $0$ & $0$ & $0$ & $0$ & $0$ & $\{b_4-b_1\} \approx 0.513$ \\
        \cline{1-10} $1$ & $-10^{-3}$ & $0$ & $0$ & $0$ & $0$ & $0$ & $0$ & $0$ & $0$ & $\{b_4-b_2\} \approx 0.995$ \\
        \cline{1-10} $2$ & $-10^{-3}$ & $0$ & $1$ & $-1$ & $0$ & $0$ & $0$ & $0$ & $0$ & $\{b_4-b_3\} \approx 0.194$ \\
        \hline
    \end{tabular}
\end{center}

Notice how all boosts which are still increasing have $10^{-3}$ in the $\epsilon$ column, making all distances in both $D_a$ and $D_c$ have $-10^{-3}$ in this column. This is referred to in invariant (i2), and means that when the smallest value from $D_a \cup D_c$ is added to the three boosts, the $\epsilon$ column becomes zero again, although it may immediately change in case $I_c < I_a$ because then a new epsilon is added. Continuing the example, the smallest value in $D_a \cup D_c$ is given by $\{b_4-b_3\} \approx 0.194$. Adding this together with a new $q \epsilon$ where $q$ can be chosen as $10^{-3}$ again leads to the following boosts, $D_a$, and $D_c$:

\begin{center}
    \begin{tabular}{|>{\boldmath}c>{\boldmath}c|c>{\boldmath}c>{\boldmath}cc|cccc||l|}
        \hline $0$ & $10^{-3}$ & $1$ & $0$ & $-1$ & $0$ & $0$ & $0$ & $0$ & $1$ & $b_1 \approx 12.509$ \\
        \cline{1-10} $0$ & $10^{-3}$ & $0$ & $1$ & $-1$ & $0$ & $0$ & $0$ & $0$ & $1$ & $b_2 \approx 12.027$ \\
        \cline{1-10} $0$ & $10^{-3}$ & $0$ & $0$ & $0$ & $0$ & $0$ & $0$ & $0$ & $1$ & $b_3 \approx 13.828$ \\
        \cline{1-10} \unboldmath{$0$} & \unboldmath{$0$} & \unboldmath{$0$} & \unboldmath{$0$} & \unboldmath{$0$} & \unboldmath{$0$} & \unboldmath{$0$} & \unboldmath{$0$} & \unboldmath{$0$} & \unboldmath{$1$} & $b_4 \approx 13.823$ \\
        \hline
        \hline $0$ & $-10^{-3}$ & $-1$ & $0$ & $1$ & $0$ & $1$ & $0$ & $0$ & $-1$ & $b^*_1-b_1 \approx 0.700$ \\
        \cline{1-10} $0$ & $-10^{-3}$ & $0$ & $-1$ & $1$ & $0$ & $0$ & $1$ & $0$ & $-1$ & $b^*_2-b_2 \approx 0.060$ \\
        \cline{1-10} $0$ & $-10^{-3}$ & $0$ & $0$ & $0$ & $0$ & $0$ & $0$ & $1$ & $-1$ & $b^*_3-b_3 \approx 1.957$ \\
        \hline
        \hline $-1$ & $-10^{-3}$ & $-1$ & $0$ & $1$ & $0$ & $0$ & $0$ & $0$ & $0$ & $\{b_4-b_1\} \approx 0.314$ \\
        \cline{1-10} $-1$ & $-10^{-3}$ & $0$ & $-1$ & $1$ & $0$ & $0$ & $0$ & $0$ & $0$ & $\{b_4-b_2\} \approx 0.796$ \\
        \cline{1-10} $1$ & $-10^{-3}$ & $0$ & $0$ & $0$ & $0$ & $0$ & $0$ & $0$ & $0$ & $\{b_4-b_3\} \approx 0.995$ \\
        \hline
    \end{tabular}
\end{center}

This has increased $\lfloor b_3 - b_4 \rfloor$ from -1 to 0, and decreased $\lfloor b_4 - b_3 \rfloor$ from 0 to -1. This is the case where one boost overtakes another (case $I_c < I_a$ subcase c in the proof) which does not change the policy.

The nearest change is now $b_2$ reaching its target boost $b^*_2$, leading to the following boosts:

\begin{center}
    \begin{tabular}{|>{\boldmath}c>{\boldmath}c|c>{\boldmath}c>{\boldmath}cc|c>{\boldmath}cc>{\boldmath}c||l|}
        \hline $0$ & $0$ & $1$ & $-1$ & $0$ & $0$ & $0$ & $1$ & $0$ & $0$ & $b_1 \approx 12.569$ \\
        \cline{1-10} $0$ & $0$ & $0$ & $0$ & $0$ & $0$ & $0$ & $1$ & $0$ & $0$ & $b_2 \approx 12.087$ \\
        \cline{1-10} $0$ & $0$ & $0$ & $-1$ & $1$ & $0$ & $0$ & $1$ & $0$ & $0$ & $b_3 \approx 13.888$ \\
        \cline{1-10} \unboldmath{$0$} & \unboldmath{$0$} & \unboldmath{$0$} & \unboldmath{$0$} & \unboldmath{$0$} & \unboldmath{$0$} & \unboldmath{$0$} & \unboldmath{$0$} & \unboldmath{$0$} & \unboldmath{$1$} & $b_4 \approx 13.823$ \\
        \hline
    \end{tabular}
\end{center}

Notice how for $b_1, b_2, b_3$, the 1 in the $4^*$ column moves to the $2^*$ column now that $b_2$ reached $b^*(s)$. This is referred to at the end of invariant (i1) as a change that sets the $4^*$ column to zero. From this point, the minimum of $D_a \cup D_c$ is $\{b_2-b_3\} \approx 0.199$, and adding this with an extra $q \epsilon$ with $q = 10^{-3}$ as before gives the following boosts:

\begin{center}
    \begin{tabular}{|>{\boldmath}c>{\boldmath}c|c>{\boldmath}c>{\boldmath}cc|cccc|l|}
        \hline $2$ & $10^{-3}$ & $1$ & $0$ & $-1$ & $0$ & $0$ & $1$ & $0$ & $0$ & $b_1 \approx 12.772$ \\
        \cline{1-10} \unboldmath{$0$} & \unboldmath{$0$} & \unboldmath{$0$} & \unboldmath{$0$} & \unboldmath{$0$} & \unboldmath{$0$} & \unboldmath{$0$} & \unboldmath{$1$} & \unboldmath{$0$} & \unboldmath{$0$} & $b_2 \approx 12.087$ \\
        \cline{1-10} $2$ & $10^{-3}$ & $0$ & $0$ & $0$ & $0$ & $0$ & $1$ & $0$ & $0$ & $b_3 \approx 14.091$ \\
        \cline{1-10} \unboldmath{$0$} & \unboldmath{$0$} & \unboldmath{$0$} & \unboldmath{$0$} & \unboldmath{$0$} & \unboldmath{$0$} & \unboldmath{$0$} & \unboldmath{$0$} & \unboldmath{$0$} & \unboldmath{$1$} & $b_4 \approx 13.823$ \\
        \hline
    \end{tabular}
\end{center}
Both $b_1$ and $b_3$ keep the 1 in the $2^*$ column as the floors that changed involved $b_2$ (which has become equal to $b^*_2$). This change increased $\lfloor b_3 - b_2 \rfloor$ from 1 to 2, and decreased $\lfloor b_2 - b_3 \rfloor$ from -2 to -3. The latter change does not affect the policy, but the former does. As $k = 2 \leq N^*(3, 2) = 3$, Theorem \ref{th:swap} shows that this change improves the prefactor of the policy.

The next change is at a distance $\{b_4-b_1\} \approx 0.051$: this changes a floor involving $b_4$ instead of $b_2$, meaning the boosts $b_1$ and $b_3$ see the 1 in the $2^*$ column move back over to the $4^*$ column. As explained at the end of invariant (i1), this change can happen many times throughout the procedure. The new boosts are given as

\begin{center}
    \begin{tabular}{|>{\boldmath}c>{\boldmath}c|>{\boldmath}cc>{\boldmath}cc|c>{\boldmath}cc>{\boldmath}c|l|}
        \hline $-1$ & $10^{-3}$ & $0$ & $0$ & $0$ & $0$ & $0$ & $0$ & $0$ & $1$ & $b_1 \approx 12.828$ \\
        \cline{1-10} \unboldmath{$0$} & \unboldmath{$0$} & \unboldmath{$0$} & \unboldmath{$0$} & \unboldmath{$0$} & \unboldmath{$0$} & \unboldmath{$0$} & \unboldmath{$1$} & \unboldmath{$0$} & \unboldmath{$0$} & $b_2 \approx 12.087$ \\
        \cline{1-10} $-1$ & $10^{-3}$ & $-1$ & $0$ & $1$ & $0$ & $0$ & $0$ & $0$ & $1$ & $b_3 \approx 14.147$ \\
        \cline{1-10} \unboldmath{$0$} & \unboldmath{$0$} & \unboldmath{$0$} & \unboldmath{$0$} & \unboldmath{$0$} & \unboldmath{$0$} & \unboldmath{$0$} & \unboldmath{$0$} & \unboldmath{$0$} & \unboldmath{$1$} & $b_4 \approx 13.823$ \\
        \hline
    \end{tabular}
\end{center}

This process continues for 6 more steps before all boosts reach their target values:
\begin{enumerate}
    \item $\lfloor b_1 - b_2 \rfloor$ increases from 0 to 1, $\lfloor b_2 - b_1 \rfloor$ decreases from -1 to -2. This is an improvement because $k = 1 \leq N^*(1, 2) = 1$,
    \item $b_1$ reaches $b^*_1$. At this point, only $b_3$ is still increasing.
    \item $\lfloor b_3 - b_4 \rfloor$ increases from 0 to 1, $\lfloor b_4 - b_3 \rfloor$ decreases from -1 to -2. This improves the policy because $k = 1 \leq N^*(3, 4) = 1$,
    \item $\lfloor b_3 - b_2 \rfloor$ increases from 2 to 3, $\lfloor b_2 - b_3 \rfloor$ decreases from -3 to -4. $k = 3 \leq N^*(3, 2) = 3$ so Theorem \ref{th:swap} says it's an improvement.
    \item $\lfloor b_3 - b_1 \rfloor$ increases from 1 to 2, $\lfloor b_1 - b_3 \rfloor$ decreases from -2 to -3. $N^*(3, 1) = 2$ so it is an improvement,
    \item and finally $b_3$ reaches $b^*_3$.
\end{enumerate}

\section{Heavy Traffic} \label{sec:heavytraffic}

In this section we introduce an approximation $b^{heavy}$ for the boost function $b^*$ which may be easier to estimate in
practice. This approximation becomes exact as the load tends to one. Moreover, we show that the prefactor of $\gamma$-CounterBoost converges to that of $\gamma$-Boost as $\lambda$ tends to one, showing that the gains $\gamma$-Boost achieves using arrival time information vanish in heavy traffic. 

\begin{theorem} \label{th:heavy}
For $\lambda$ tending to one we have $b^*_i - b^*_j \approx log(E[S_j]/E[S_i])/log(1+\dr/\lambda)$ and
    \begin{align}
        \lim_{\lambda \rightarrow 1^-} c_{\gamma-CounterBoost} = c_{FCFS} \prod_{j=1}^d E[S_j]^{-p_j E[S_j]},
    \end{align}
    which is the same limit as for $\gamma$-Boost in case of unknown job sizes and $d$ job types.
\end{theorem}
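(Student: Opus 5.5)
The plan is a heavy traffic expansion of $\gamma$ and the transforms $\tS{i}$, followed by substitution into the closed form of Theorem \ref{th:cboostcontextless}.

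\textbf{Step 1 (asymptotics of $\gamma$ and the boosts).} The decay rate $\dr$ is the positive root of $\lambda(\tS{}-1)=\dr$, so $\tS{}=1+\dr/\lambda$. As $\lambda\to 1^-$ we have $\dr\to 0$; more precisely $\dr\approx 2(1-\lambda)/E[S^2]$. Expanding $\tS{i}=1+\dr E[S_i]+O(\dr^2)$ gives
\[\frac{\tS{i}}{\tS{i}-1}\approx \frac{1}{\dr E[S_i]}.\]
Hence
\[b_i^*-b_j^*=\frac{1}{\log(1+\dr/\lambda)}\log\frac{\tS{i}(\tS{j}-1)}{\tS{j}(\tS{i}-1)}.\]
The logarithm on the right equals $\log(E[S_j]/E[S_i])+O(\dr)$. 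Since $1/\log(1+\dr/\lambda)=O(1/\dr)$, the $O(\dr)$ error contributes only an $O(1)$ shift in $b_i^*-b_j^*$, whereas the leading term is of order $1/\dr$. This is the stated approximation.

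\textbf{Step 2 (the prefactor).} Label the types so that $E[S_1]\le\dots\le E[S_d]$, which means $b_1^*\ge\dots\ge b_d^*$. Apply Theorem \ref{th:cboostcontextless}, which gives
\[c_{\gamma-CounterBoost}=\sum_i p_i\,\tS{i}\,c_{W^{(i)}}(b^*),\qquad c_Z=c_{FCFS}/\tS{},\qquad \tS{i}\to 1.\]
Each exponent $N^*(j-1,i)-N^*(j,i)$ equals $(b_{j-1}^*-b_j^*)+O(1)$ up to flooring, and is therefore of order $1/\dr$. Each base satisfies $\tT{j}=1+\dr\sum_{k<j}p_kE[S_k]+O(\dr^2)$ and $\tT{j}/\tS{}=1-\dr\sum_{k\ge j}p_kE[S_k]+O(\dr^2)$. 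Each factor then has the form $(1+\dr a)^{L/\dr+O(1)}\to e^{aL}$, where $L=\log(E[S_j]/E[S_{j-1}])$. The floor and $O(1)$ corrections vanish because they multiply $\log(1+O(\dr))\to0$.

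\textbf{Step 3 (the type-$i$ limit).} Write $m_k=p_kE[S_k]$, so that $\sum_k m_k=1$. For type $i$ the limit is
\[\exp\Big(\sum_{j=2}^{i}\big(\textstyle\sum_{k<j}m_k\big)\log\tfrac{E[S_j]}{E[S_{j-1}]}-\sum_{j=i+1}^{d}\big(\sum_{k\ge j}m_k\big)\log\tfrac{E[S_j]}{E[S_{j-1}]}\Big).\]
Summation by parts (Abel) simplifies this exponent to $\log E[S_i]-\sum_k m_k\log E[S_k]$. Therefore $c_{W^{(i)}}\to c_{FCFS}\,E[S_i]\prod_k E[S_k]^{-m_k}$. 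Averaging with weights $p_i$ and using $\sum_ip_iE[S_i]=1$ yields the claimed limit. Finally, I would cite or recompute the corresponding $\gamma$-Boost limit from the earlier work, whose boost $b_i$ differs from $b_i^*$ only by the factor $\log\tS{}/\dr\to1$, to confirm that the two limits agree.

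\textbf{Main obstacle.} The hardest part is rigor in Step 2. Exponents of size $1/\dr$ multiply logarithms of size $\dr$, so I must check that the second-order terms in the expansions of $\tS{i}$, $\tT{j}$, and $\dr$ contribute only $O(\dr)$ to each exponent. I also need uniform control of the ordering: if some $E[S_i]$ coincide, the corresponding $N^*$ differences stay $O(1)$ and their factors tend to $1$.
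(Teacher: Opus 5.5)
Your proposal is correct and follows essentially the same route as the paper. Both arguments expand $\tS{i}\approx 1+\dr E[S_i]$ and $\tS{}=1+\dr/\lambda$, substitute into the product formula of Theorem~\ref{th:cboostcontextless} using $(1+ax)^{b/\log(1+x)}\to e^{ab}$, telescope each $c_{W^{(i)}}(b^*)$ to $c_Z E[S_i]\prod_j E[S_j]^{-p_jE[S_j]}$, and average over types. Your handling of the $O(1)$ floor corrections, the second-order terms and tied means is more careful than the paper's, which simply asserts these approximations and, like you, takes the matching $\gamma$-Boost limit from prior work.
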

\begin{proof}
    Recall that $\tS{}=1+\dr/\lambda$ for Poisson arrivals \cite{abate1}.
    The key observation is that $\tS{i} \approx 1+\dr E[S_i]$ for $\lambda$ tending to one. This means we have
    \begin{align*}
        \tT{j} = \sum_{k=1}^d p_k \tS{k}^{1(k<j)} \approx 1 + \dr \sum_{k=1}^{j-1} p_k E[S_k].
    \end{align*}
    Using $\lim_{x \rightarrow 0} (1+ax)^{b/\log(1+x)} = e^{ab}$, one finds for $j \leq i$
    \begin{align*}
         \tT{j}^{N^*(j-1,i)-N^*(j,i)} &\approx \left(1 + \dr \sum_{k=1}^{j-1} p_k E[S_k] \right)^{b^*_{j-1} - b^*_j}  \approx \left(1 + \dr \sum_{k=1}^{j-1} p_k E[S_k] \right)^{log(E[S_j]/E[S_{j-1}])/log(1+\dr)} 
         \\&\approx \left(\frac{E[S_j]}{E[S_{j-1}]}\right)^{\sum_{k=1}^{j-1} p_k E[S_k]}
    \end{align*}
    and similarly due to $\tS{}^{b^*_j - b^*_{j-1}} \approx E[S_j]/E[S_{j-1}]$ for $j > i$
    \begin{align*}
        \left(\frac{\tT{j}}{\tS{}}\right)^{N^*(i,j)-N^*(i,j-1)} \approx \left(\frac{E[S_j]}{E[S_{j-1}]}\right)^{-1 + \sum_{k=1}^{j-1} p_k E[S_k]}.
    \end{align*}
    From this we have
    \begin{align*}
    c_{W^{(i)}}(b^*) & \approx c_Z \prod_{j=2}^{i} \left(\frac{E[S_j]}{E[S_{j-1}]}\right)^{\sum_{k=1}^{j-1} p_k E[S_k]} \prod_{j=i+1}^{d} \left(\frac{E[S_j]}{E[S_{j-1}]}\right)^{-1 + \sum_{k=1}^{j-1} p_k E[S_k]} \\
    &= c_Z \prod_{j=2}^{d} \left(\frac{E[S_j]}{E[S_{j-1}]}\right)^{\sum_{k=1}^{j-1} p_k E[S_k]} \prod_{j=i+1}^{d} \left(\frac{E[S_{j-1}]}{E[S_j]}\right) = c_Z E[S_i] \prod_{j=1}^d E[S_j]^{-p_j E[S_j]}.
    \end{align*}
    which we can then use to simplify $c_{\gamma-CounterBoost}$:
    \begin{align*}
        c_{\gamma-CounterBoost} &= \sum_{i=1}^d p_i \tS{i} c_{W^{(i)}}(b^*) \approx \sum_{i=1}^d p_i \tS{i} c_Z E[S_i] \prod_{j=1}^d E[S_j]^{-p_j E[S_j]} \\
        &\approx c_Z \sum_{i=1}^d p_i  E[S_i] \prod_{j=1}^d E[S_j]^{-p_j E[S_j]} \approx c_{FCFS} \prod_{j=1}^d E[S_j]^{-p_j E[S_j]},
    \end{align*}
    where $c_Z \approx c_{FCFS} = c_Z \tS{}$ as $\lim_{\gamma \rightarrow 0} \tS{} = 1$, and similarly $\lim_{\gamma \rightarrow 0} \tS{i} = 1$, which is why this factor can be removed going from the first to the second line.
\end{proof}

In practice, it may be difficult to estimate the optimal boosts $b^*$ accurately as this relies on the decay rate and the Laplace transforms of the job size distributions. A simplified expression for the boost in the heavy traffic limit 
is easier to estimate and has a similar tail as $\gamma$-CounterBoost for high loads:
\begin{align}
    b^*_i &= \log \left( \frac{\tS{i}}{\tS{i}-1} \right) / \log \tS{} \approx \log \left( \frac{1 + \dr E[S_i]}{\dr E[S_i]} \right) / \log (1 + \dr) \nonumber \\
    &\approx \log \left( 1 + \frac{1}{\dr E[S_i]} \right) / \dr \approx \frac{-\log \dr - \log E[S_i]}{\dr}.  \label{eq:bheavy_nokingman}
\end{align}
The expression for $b^{heavy}_i$ is then found by adding a term $\log (\dr) / \dr$ to each boost (which does not change the policy), and using the Kingman heavy traffic approximation \cite{kingman62} for the decay rate $\dr \approx \frac{2(1-\lambda)}{E[S^2]}$:
\begin{align}
    b^{heavy}_i = \frac{E[S^2]}{2(\lambda-1)} \log E[S_i]. \label{eq:bheavy}
\end{align}

In the next section we show how this approximation for $b^*_i$ affects the tail constant for two particular settings.

\section{Numerical Results}\label{sec:num}

In this section we show some numerical results, comparing $\gamma$-CounterBoost to other policies such as $\gamma$-Boost and Nudge-$M$, or to CounterBoost policies with approximated optimal boosts, such as the heavy traffic approximation from Section \ref{sec:heavytraffic}.
We show what happens with two different somewhat arbitrarily chosen settings (a and b), each with three job types.
Whenever we use a hyperexponential (HE) distribution, we specify its mean, the squared coefficient of variation ($SCV$), and the shape parameter $f$, which indicates what fraction of the work is contributed by the first phase of the hyperexponential distribution (with a fraction $1-f$ contributed by the second phase). The settings are as follows:
\begin{enumerate}
    \item[a.] The three types occur with probabilities $p_1= 1/8$, $p_2=3/4$, and $p_3=1/8$ respectively. Their sizes are hyperexponentially distributed with two phases. The means are $1/5$, $1$, and $9/5$, the SCVs are $5$, $10$, and $3$, and the shapes $f$ are $9/10$, $5/10$, and $7/10$.
    \item[b.] The probabilities are $p_1 = 1/5$, $p_2 = 3/4$, $p_3 = 1/20$. Type-1 jobs are exponentially distributed with mean $1/10$, type-2 jobs follow an Erlang distribution with two phases and mean $9/10$, and type-3 jobs follow a hyperexponential distribution with mean $61/10$, $SCV = 5$, and shape $f = 7/10$.
\end{enumerate}
All plots show the ratio between the tail constant of these policies over FCFS.

\begin{figure*}[t!]
\begin{subfigure}[t]{.48\textwidth}
  \centering
  \includegraphics[width=\linewidth]{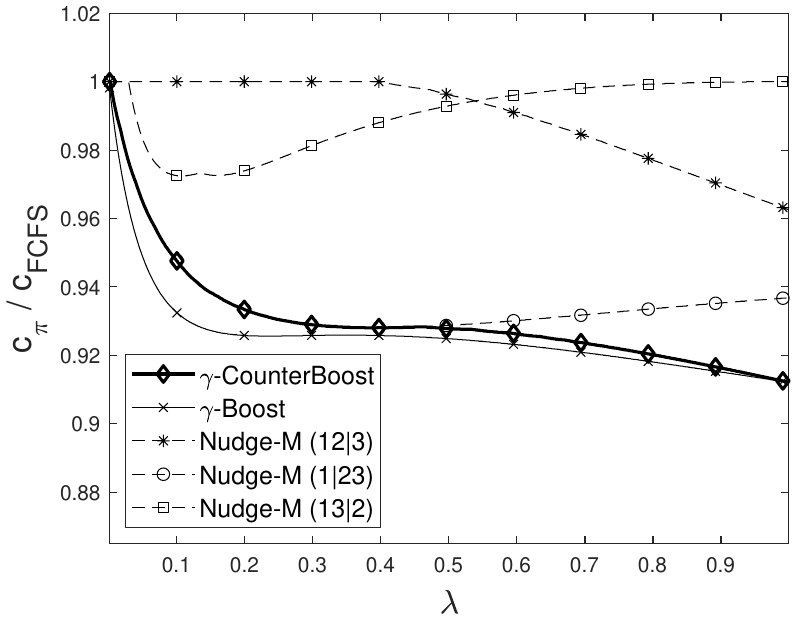}
  \caption{Hyperexponential job sizes}\label{fig:1a}
\end{subfigure}
\begin{subfigure}[t]{.48\textwidth}
  \centering
  \includegraphics[width=\linewidth]{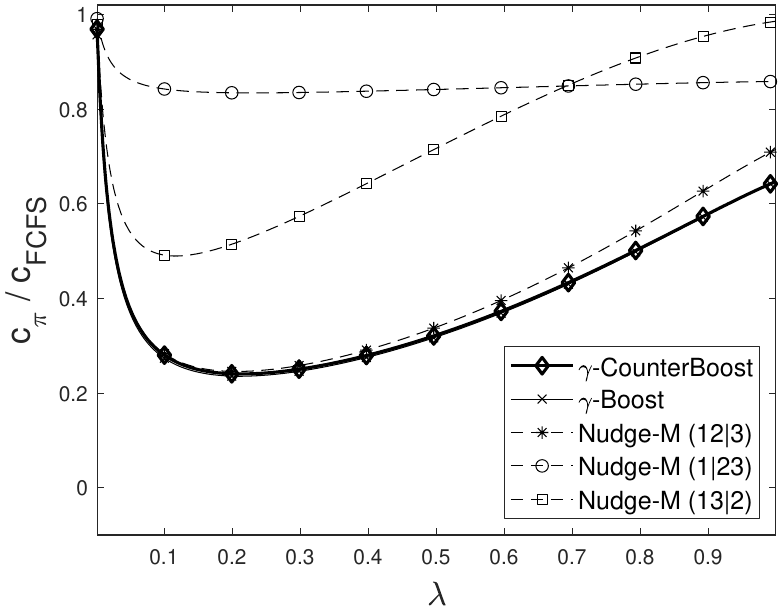}
  \caption{Mix of job size distributions}\label{fig:1b}
\end{subfigure}

\caption{The ratio between the tail constant of $\gamma$-Boost, $\gamma$-CounterBoost, and Nudge-$M$ over FCFS. For Nudge-$M$, three curves show the three different ways to group two of the three types together, as that policy requires two job types.}
\label{fig:1}
\end{figure*}

Figure \ref{fig:1} shows that $\gamma$-Boost is indeed somewhat better than the other policies especially for lower loads, but as the load tends to 1, $\gamma$-CounterBoost reaches the same tail constant, as proven in Section \ref{sec:heavytraffic}. It is also visible that the tail constant of Nudge-$M$ depends strongly on which two types are joined together. In Figure \ref{fig:1a}, for lower loads merging types 2 and 3 gives the same constant as $\gamma$-CounterBoost while for higher loads, keeping all three types distinct is crucial. The other two ways to merge two types are less effective in this setting. In Figure \ref{fig:1b}, $\gamma$-CounterBoost has nearly the same tail constant as $\gamma$-Boost, the difference is too small to see on the plot. In this setting, merging types 1 and 2 together does not increase the tail constant that much for higher loads. Further note that the gain over FCFS is much more
substantial in setting b, where $\gamma$-Boost and $\gamma$-CounterBoost nearly coincide for all loads.

Figure \ref{fig:2} shows the same ratio $c_\pi / c_{FCFS}$ as before, now comparing $\gamma$-CounterBoost to a CounterBoost policy where the boosts are chosen 30\% or 50\% above and below the optimal boosts. We see that the tail constant is not very sensitive to these rather large deviations from the optimal boosts in the two settings considered. We note that this figure, as well as Figure \ref{fig:3} contain non-smooth lines. These are explained by the policy not changing continuously, but rather changing in steps each time an $N_b(i, j)$ changes. For higher loads the boosts are larger and 
therefore the non-smoothness is less pronounced as 
for instance increasing $N_b(i, j)$ from $100$ to $101$ has far less impact that changing it from $3$ to $4$.

\begin{figure*}[t!]
\begin{subfigure}[t]{.48\textwidth}
  \centering
  \includegraphics[width=\linewidth]{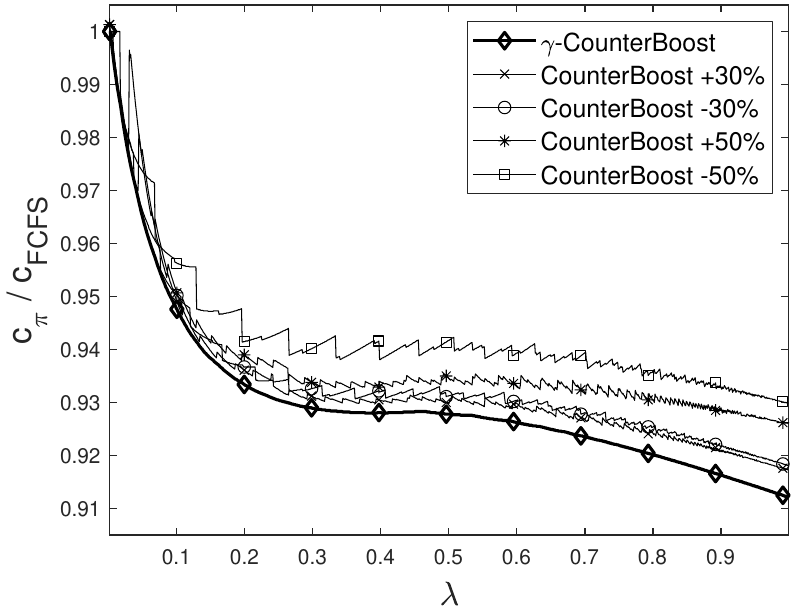}
  \caption{Hyperexponential job sizes} \label{fig:2a}
\end{subfigure}
\begin{subfigure}[t]{.48\textwidth}
  \centering
  \includegraphics[width=\linewidth]{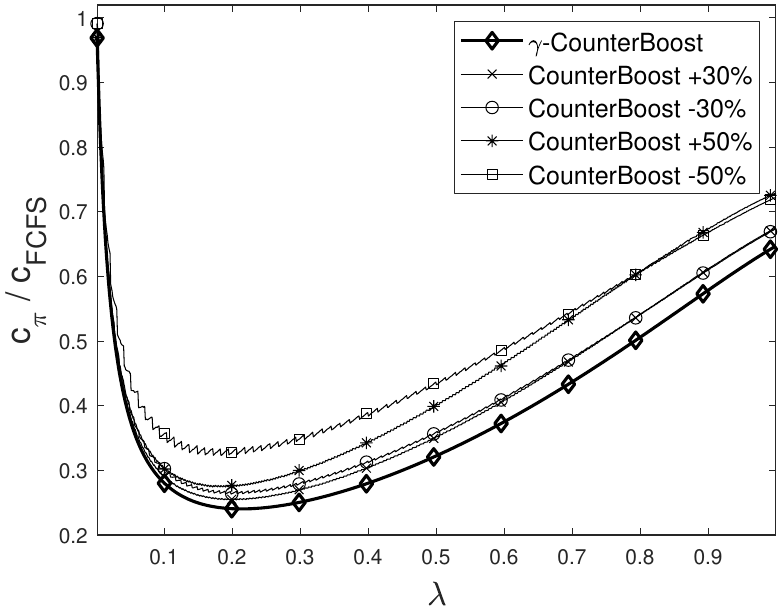}
  \caption{Mix of job size distributions} \label{fig:2b}
\end{subfigure}

\caption{The ratio between the tail constant of $\gamma$-CounterBoost over FCFS, compared to this ratio when over- or underestimating the boosts by $30\%$.}
\label{fig:2}
\end{figure*}

Finally we look at  the tail constant of CounterBoost when $b^*$ is replaced by its heavy traffic approximation 
$b^{heavy}$ in Figure \ref{fig:3}. The figure illustrates that using $b^{heavy}$ yields most of the improvement when the load is reasonably high. The figure also shows the tail constant when using a variation on $b^{heavy}$, where we use the exact decay rate 
$\gamma$ instead of the Kingman approximation $2(1-\lambda)/E[S^2]$, i.e., using \eqref{eq:bheavy_nokingman} instead of \eqref{eq:bheavy}. In Figure \ref{fig:3a}, using the exact decay rate is actually worse than using the approximation, while in Figure \ref{fig:3b}, using the correct decay rate instead of the approximation is better for most loads.
This figure also shows CounterBoost with boosts $b^{expo}$: these are the optimal boosts for a queue where the probabilities 
$p_1, p_2$ and $p_3$ as well as the per type mean job sizes are the same as before, but the boosts are computed
assuming type-$1$, $2$ and $3$ job sizes are exponential. The boost $b^{expo}$ only requires estimating the 
probabilities $p_i$ and mean job size $E[S_i]$
per type, but as shown in Figure \ref{fig:3a}, the computed boost values are not particularly good. 
These boosts achieve typically less than half of the improvement over FCFS compared to $\gamma$-CounterBoost in both settings.

\begin{figure*}[t!]
\begin{subfigure}[t]{.48\textwidth}
  \centering
  \includegraphics[width=1\linewidth]{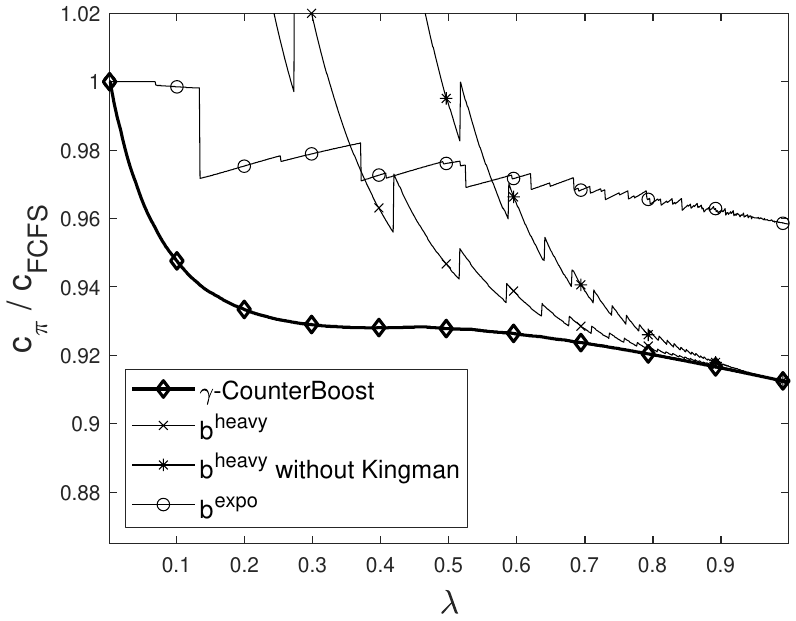}
	\caption{Hyperexponential job sizes}
	\label{fig:3a}
\end{subfigure}
\begin{subfigure}[t]{.48\textwidth}
  \centering
  \includegraphics[width=1\linewidth]{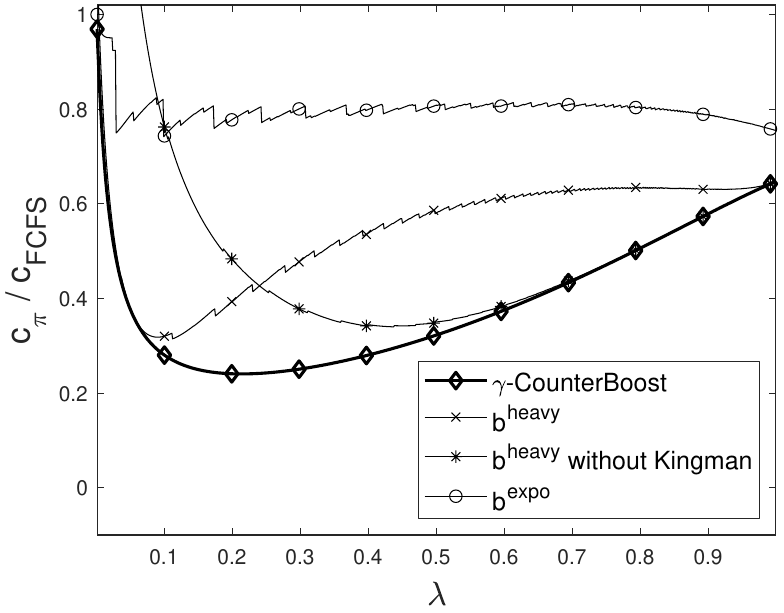}
	\caption{Mix of job size distributions}
	\label{fig:3b}
\end{subfigure}
\caption{The ratio between the prefactor of CounterBoost using $b^*$ and $b^{heavy}$ respectively over FCFS.}
\label{fig:3}
\end{figure*}

\section{Conclusion} \label{sec:conc}
The $\gamma$-CounterBoost policy was shown to be tail optimal within the class of Contextual CounterBoost policies in a partial information setting. We conjecture that in this setting, $\gamma$-CounterBoost is tail optimal among not just Contextual CounterBoost policies, but among all policies. In the partial information setting considered the scheduler knows the type of incoming jobs, but not their exact sizes. $\gamma$-CounterBoost schedules jobs based on a boosted arrival counter, where the boost value only depends on the type of the incoming job. While in general $\gamma$-CounterBoost has an inferior tail\footnote{This was proven for $d=2$ types in \cite{boostZiv}, but the result should generalize to any finite number of job types $d$.} compared to the $\gamma$-Boost policy introduced in \cite{boostZiv}, which boosts arrival {\it times} as opposed to {\it counters},  both tails were shown to coincide in the heavy traffic limit. %
Numerical results further showed that the gain of $\gamma$-Boost over $\gamma$-CounterBoost can be close to zero for all arrival rates in some particular settings.

As future work one could also try to prove optimality
of $\gamma$-Boost in the partial information setting within a class of contextual policies, that is, 
if the arrival time boost of a job not only depends on the job type, but also the types of the previous
and next $M$ job arrivals. One could also try to generalize the result in this paper to a full information setting by considering infinitely many job types, where the counter boost of a job depends on its size.

\section*{Acknowledgement}
We would like to thank George Yu and Ziv Scully for some useful discussions and for introducing the class of 
Contextual CounterBoost policies. This work was supported by the FWO project G0A9823N.

\bibliographystyle{IEEEtran}
\bibliography{thesis}

\end{document}